\title{Non-commutative propositional logic with short-circuited biconditional and NAND}
\author{
	Dalia Papuc$^1$ \& Alban Ponse$^2$\\[2mm]
  {\small 1. School of Computer and Communication Sciences, EPFL}\\
  {\small 2. Section Theory of Computer Science, Informatics Institute}\\[0mm]
  {\small  Faculty of Science, University of Amsterdam}\\[0mm]
  {\small \url{https://daliapapuc.com/} \& \url{https://staff.science.uva.nl/a.ponse/}
}
}
\date{}
\newcommand{\tr}{\ensuremath{{\sf T}}}
\newcommand{\fa}{\ensuremath{{\sf F}}}
\newcommand{\true}{\ensuremath{\textit{true}}}
\newcommand{\false}{\ensuremath{\textit{false}}}
\newcommand{\lef}{\ensuremath{\triangleleft}}
\newcommand{\rig}{\ensuremath{\triangleright}}
\newcommand{\SigCP}{\ensuremath{\Sigma_{\textup{CP}}(A)}}
\newcommand{\SigSCL}{\ensuremath{\Sigma_{\textup{SCL}}(A)}}
\newcommand{\SigIff}{\ensuremath{\Sigma_{\textup{SCL$\ell$I}}(A)}}
\newcommand{\SigIffD}{\ensuremath{\Sigma_{\textup{SCL$\ell$I$\ell$X}}(A)}}
\newcommand{\SigSCLu}{\ensuremath{\Sigma_{\textup{SCL}^\und}(A)}}
\newcommand{\SigMix}{\ensuremath{\Sigma{(A)}}}
\newcommand{\axname}[1]{\ensuremath{\textup{\textrm{#1}}}}
\newcommand{\CP}{\axname{CP}}
\newcommand{\CPmem}{\axname{\CP$_{\mem}$}}
\newcommand{\CPmemu}{\ensuremath{\axname{\CP}^\und_{\mem}}}
\newcommand{\MSCLeu}{\ensuremath{\axname{EqMSCL}^\und}}
\newcommand{\MSCLu}{\ensuremath{\axname{MSCL}^\und}}
\newcommand{\und}{\ensuremath{{\mathsf U}}}
\newcommand{\MSCLe}{\axname{EqMSCL}}
\newcommand{\MSCLea}{\axname{EqMSCL$_{\ell\text{I}}$}}
\newcommand{\MSCLeau}{\axname{EqMSCL$^\und_{\ell\text{I}}$}}
\newcommand{\MSCLead}{\axname{EqMSCL$_{\ell\text{I}\ell\text{X}}$}}
\newcommand{\MSCL}{\axname{MSCL}}
\newcommand{\SCLe}{\axname{EqFSCL}}
\newcommand{\NANDe}{\axname{EqMSCL$_{\ell\text{N}}$}}
\newcommand{\NANDeu}{\axname{EqMSCL$^\und_{\ell\text{N}}$}}
\newcommand{\FSCL}{\axname{FSCL}}
\newcommand{\Terms}{\ensuremath{{\textit{T}_A}}}
\newcommand{\MSNFu}{\ensuremath{\text{mUNBF}\,}}
\newcommand{\MFu}{\ensuremath{\textit{MUNBF}_A\,}}
\newcommand{\MSNFus}{\ensuremath{\text{mUNBFs}\,}}
\newcommand{\BF}{\ensuremath{\textit{BF}_A}}
\newcommand{\MBF}{\ensuremath{\textit{MBF}_A}}
\newcommand{\MBFu}{\ensuremath{\textit{MBF}_A^{\;\und}}}
\newcommand{\baf}{\ensuremath{\mathit{bf}}}
\newcommand{\mem}{\ensuremath{{mem}}}
\newcommand{\membf}{\ensuremath{\mathit{mbf}}}
\newcommand{\memf}{\ensuremath{\mathit{mf}}}
\newcommand{\PS}{\ensuremath{\textit{C}_A}}
\newcommand{\PSu}{\ensuremath{\textit{C}_A^{\;\und}}}
\newcommand{\ri}{r}
\newcommand{\dl}{\textit{dl}}
\newcommand{\leftand}{~
     \mathbin{\setlength{\unitlength}{1ex}
     \begin{picture}(1.4,1.8)(-.3,0)
     \put(-.6,0){$\wedge$}
     \put(-.54,-0.2){\textcolor{white}{\circle*{0.6}}}
     \put(-.54,-0.2){\circle{0.6}}
     \end{picture}
     }}
\newcommand{\fulland}{~
     \mathbin{\setlength{\unitlength}{1ex}
     \begin{picture}(1.4,1.8)(-.3,0)
     \put(-.6,0){$\wedge$}
     \put(-.54,-0.2){\circle*{0.6}}
     \end{picture}
     }}
\newcommand{\leftor}{~
     \mathbin{\setlength{\unitlength}{1ex}
     \begin{picture}(1.4,1.8)(-.3,0)
     \put(-.6,0){$\vee$}
     \put(-.54,1.54){\textcolor{white}{\circle*{0.6}}}
     \put(-.54,1.54){\circle{0.6}}
     \end{picture}
     }}
\newcommand{\lxor}{~
     \mathbin{\setlength{\unitlength}{1ex}
     \begin{picture}(1.5,1.8)(-.5,0)
     \put(-.8,0){$\oplus$}
     \put(-1.02,0.56){\circle{0.6}}
     \end{picture}
     }}
\newcommand{\liff}{~
     \mathbin{\setlength{\unitlength}{1ex}
     \begin{picture}(1.9,1.8)(-.5,0)\linethickness{0.16mm}
     \put(-.8,0){$\leftrightarrow$}
     \put(-.98,0.58){\circle{0.6}}
          \end{picture}
     }}
\newcommand{\lnand}{~
     \mathbin{\setlength{\unitlength}{1ex}
     \begin{picture}(1.1,1.8)(-.9,0)
     \put(-.8,0){$\mid$}
     \put(-.82,.56){\circle{0.6}}
     \end{picture}
     }}
\newcommand{\lnor}{~
     \mathbin{\setlength{\unitlength}{1ex}
     \begin{picture}(1.1,1.8)(-.9,0)
     \put(-.8,-.20){$\downarrow$}
     \put(-.62,.76){\circle{0.6}}
     \end{picture}
     }}
\newtheorem{theorem}{Theorem}[section]
\newtheorem{lemma}[theorem]{Lemma}  
\newtheorem{definition}[theorem]{Definition}  
\theoremstyle{definition}
\newtheorem{example}[theorem]{Example}
\begin{document}
\maketitle
\thispagestyle{empty}

\begin{abstract}
Short-circuit evaluation denotes the semantics of 
propositional connectives in which the second
argument is evaluated only if the first argument does not suffice 
to determine the value of the expression.
In programming, short-circuit evaluation is widely used, with
left-sequential conjunction and disjunction as primitive connectives.

We consider left-sequential, non-commutative propositional logic, also known as MSCL (memorising 
short-circuit logic), and start from a previously published, equational axiomatisation. 
First, we extend this logic with a left-sequential version of the biconditional connective, which 
allows for an elegant axiomatisation of MSCL. 
Next, we consider a left-sequential version of the NAND operator (the Sheffer stroke) and 
again give a complete, equational axiomatisation of the corresponding variant of MSCL. 
Finally, we consider these logical systems in a three-valued setting with a constant for 
`undefined', and again provide completeness results.
\\[2mm]
\emph{Keywords:}
Non-commutative propositional logic,
conditional connective,
sequential connective,
NAND,
short-circuit evaluation,
proposition algebra
\end{abstract}

\setlength{\cftbeforesecskip}{7pt}
{\small \tableofcontents}

\section{Introduction}
This paper is about non-commutative propositional logic, also known as Memorising Short-Circuit Logic (\MSCL),
enriched with some alternative connectives. 
In~\cite{BPS13}, \MSCL\ is defined 
as a logic for equational reasoning about sequential propositions with the property that atomic side 
effects do not occur, so that in the evaluation of a compound statement, the first evaluation result of each atom is 
memorised. Furthermore, the prescribed evaluation strategy
is \emph{short-circuit evaluation}: the second argument in
a conjunction or a disjunction is evaluated only if the first argument does not suffice 
to determine its evaluation result. In \MSCL, the binary connectives are left-sequential and written as
\[\leftand\quad\text{and}\quad\leftor, \]
where the little circle  prescribes that the left-argument must be evaluated first. 
We define these connectives using Hoare's \emph{conditional}, a ternary connective that naturally
prescribes short-circuit evaluation (in an if-then-else manner), and which has an elegant duality property. 

Left-sequential conjunction ${\leftand}$ and disjunction $\leftor$ are not commutative, 
but many common equational laws hold, such as the double negation shift, and idempotence and left-distributivity
of the binary connectives. 

We first extend \MSCL\ with the connective $\liff$, a left-sequential version of the biconditional 
connective $\leftrightarrow$, which in the MSCL-setting can be defined in different, equivalent ways:
\[
x\liff y=(\neg x\leftor y)\leftand(x\leftor \neg y)=(x\leftand y)\leftor(\neg x\leftand \neg y).
\]
We provide a complete, equational axiomatisation for this extension.
The combination of $\liff$ with ${\leftand}$, ${\leftor}$ and negation allows for an equational axiomatisation 
that is simple and perhaps more natural than the one for \MSCL.

Then, we consider an alternative for \MSCL, based on a left-sequential version of the NAND-operator 
(the Sheffer stroke) and again provide a complete, equational axiomatisation.

Finally, following~\cite{BPS21}, we consider extensions of these logical systems with \und, 
a constant for the truth value ``undefined". We discuss an advantage of the resulting NAND-system.

\textbf{Structure of the paper.}
In Section \ref{sec:2}, Hoare's ternary \emph{conditional} and `basic forms'
for this connective are discussed, and the left-sequential connectives are defined with this connective.
\\-
In Section \ref{sec:MVC} memorising valuation congruence is discussed, the congruence that characterises
\MSCL, and a normalisation function for so-called `mem-basic forms'.
\\-
In Section \ref{sec:mscl}, we discuss in detail the equational axiomatisation for \MSCL\ that is the 
starting point for this paper.
\\-
In Section \ref{sec:bico} we introduce a left-sequential variant of the biconditional connective
and give an equational axiomatisation for the extension of \MSCL\ with this connective. 
We prove a correspondence result, and briefly discuss duality.
\\-
In Section \ref{sec:nand} we discuss a left-sequential variant of the NAND connective, provide axioms for 
the memorising variant of this
extension, and prove a correspondence result.
\\-
In Section \ref{sec:three} we consider the extension of the new logical systems with \und.
\\-
In Section \ref{sec:conc} we present some conclusions and discuss related work.

All derivations from equational axiomatisations were found by the theorem prover 
\emph{Prover9}, and all independence results were found by the tool \emph{Mace4}, 
see~\cite{Prover9} for both tools.

\section{The conditional, basic forms, and propositional connectives}
\label{sec:2}
In this section we recall Hoare's ternary \emph{conditional} connective and `basic forms'
for the conditional. Next, we define sequential versions of the common propositional connectives using
this connective.

\smallskip

In 1985, Hoare introduced in~\cite{Hoa85} the ternary \emph{conditional} connective 
$p\lef q\rig r$ in order to express ``\texttt{if $q$ then $p$ else $r$}'' \footnote{%
  However, in 1948, Church introduced in~\cite{Chu48} the \emph{conditioned disjunction}
  connective $[p,q,r]$, which, following  the author, may be read ``$p$ or $r$ according 
  as $q$ or not $q$.'' and which expresses exactly the same connective as Hoare's conditional. 
  We further discuss this in Section~\ref{sec:conc} (Related work).}
and provided eleven equational axioms to show that the conditional and the two constants \tr\
and \fa\ for 
truth and falsehood characterise the propositional calculus.\footnote{%
  In 1963, Dicker provided in~\cite{Dic63} a set of five independent and elegant axioms 
  for the conditioned disjunction. Unaware of this,
  we provided in~\cite{BPS21} a set of four simple, independent axioms that is also complete.}

We are interested in both duality and equational axioms, and 
moreover in the short-circuit evaluation strategy suggested by the if-then-else reading that
prescribes that in $p\lef q\rig r$, 
\emph{first} $q$ is evaluated,
and only \emph{then} either $p$ or $r$ to determine the result of the evaluation.

Throughout this paper let $A$ be a set of atoms (propositional variables). The signature we consider is 
$\SigCP=\{\_\lef \_ \rig \_,\tr,\fa, a\mid a\in A\}$ and we write
\[\PS\]
for the set of closed terms over this signature.
Table~\ref{tab:CP} displays of a set of equational axioms for terms over
this signature, and we will refer to these axioms as \CP\ (for Conditional
Propositions).

\begin{table}
\hrule
\begin{align*}
\label{CP1}
\tag{CP1} x \lef \tr \rig y &= x\\
\label{CP2}\tag{CP2}
x \lef \fa \rig y &= y\\
\label{CP3}\tag{CP3}
\tr \lef x \rig \fa  &= x\\
\label{CP4}\tag{CP4}
\qquad
    x \lef (y \lef z \rig u)\rig v &= 
	(x \lef y \rig v) \lef z \rig (x \lef u \rig v)
\end{align*}
\hrule
\caption{\CP\ (Conditional Propositions), a set of equational axioms for 
free valuation congruence}
\label{tab:CP}
\end{table}

The dual of a closed term $P\in\PS$, notation $P^{\dl}$, is defined as follows 
(for $a\in A$):
\begin{align*}
\tr^{\dl}&=\fa,
&a^{\dl}&=a,\\
\fa^{\dl}&=\tr,
&(P\lef Q\rig R)^{\dl}&= R^{\dl}\lef Q^\dl\rig P^{\dl}.
\end{align*} 
The duality mapping is an involution, $(P^{\dl})^{\dl}=P$.
Setting $x^{\dl}=x$ for each variable $x$, the duality principle
extends to equations and it is easy to see that \CP\ is a self-dual axiomatisation:
\eqref{CP1} and \eqref{CP2} are each other's dual, and \eqref{CP3} and~\eqref{CP4}
are self-dual (i.e., identical to their own duals). Hence, 
\[\text{For all terms $s,t$ over $\SigCP,~\CP\vdash s=t ~\iff~\CP\vdash s^\dl=t^\dl$}.\]

We define the subset of `basic forms' of \PS\  that can be used to prove that \CP\ is 
a complete set of axioms.

\begin{definition}
\label{def:basic}
\textbf{Basic forms over $A$} are defined by the following grammar
\[t::= \tr\mid\fa\mid t\lef a \rig t\quad\text{for $a\in A$.}\]
We write $\BF$ for the set of basic forms over $A$. 
\end{definition}

Basic forms\footnote{%
  We speak of `basic form' rather than `normal form' because the basic form associated with atom $a$
  is $\tr\lef a\rig \fa$, while  for normal forms, one could have expected the reverse.} can be seen as \emph{evaluation trees}, which are binary, rooted trees with internal nodes
labelled from $A$ and leaves in $\{\tr,\fa\}$:
a branch from the root to a leaf represents the process of evaluation and all internal nodes 
represent the evaluation of the atom it is labelled with, while the leaf represents the final evaluation result.
Also, $\tr$ and $\fa$ are seen as evaluation trees that represent the evaluation of the constants
$\tr$ and $\fa$, respectively. Typically, the evaluation tree associated with $\tr\lef a\rig\fa$ is
\[
\hspace{-12mm}
\begin{tikzpicture}[%
      level distance=7.5mm,
      level 1/.style={sibling distance=15mm},
      baseline=(current bounding box.center)]
      \node (a) {$a$}
        child {node (b1) {$\tr$}}
        child {node (b2) {$\fa$}}
        ;
      \end{tikzpicture}
\]

Evaluation trees were introduced in~\cite{Daan}, and a formal relation between their equality and that 
of the associated basic forms is established in~\cite{BP15}. However, basic forms themselves can be
represented as evaluation trees, as explained in the following example. 

\begin{example}
\label{ex1}
The basic form
$\fa\lef b\rig(\tr\lef a\rig\fa)$
can be represented as follows, where $\lef$ yields a left branch (the \true-case), 
and $\rig$ a right branch (the \false-case):
\[
\hspace{-12mm}
\begin{tikzpicture}[%
      level distance=7.5mm,
      level 1/.style={sibling distance=15mm},
      level 2/.style={sibling distance=7.5mm},
      baseline=(current bounding box.center)]
      \node (a) {$b$}
        child {node (b1) {$\fa$}
        }
        child {node (b2) {$a$}
          child {node (d1) {$\tr$}} 
          child {node (d2) {$\fa$}}
        };
      \end{tikzpicture}
\]
and expresses that if $b$ evaluates to \true\ (left branch), the expression evaluates to \false,
while if $b$ evaluates to \false\ (right branch), the evaluation of $a$ determines 
the overall evaluation result.
\end{example}

We now recall some definitions and results from~\cite{BP15}. In order to support
the intuitions, we spell out the proofs of Lemma~\ref{la:nieuw} and Theorem~\ref{thm:1a}
(in~\cite[La.2.17 and Thm.2.18]{BP15}).

\begin{lemma}
\label{la:2.5}
For each $P\in\PS$ there exists $Q\in\BF$ such that $\CP\vdash P=Q$.
\end{lemma}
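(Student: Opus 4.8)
The plan is to prove the statement by structural induction on the closed term $P\in\PS$, but to isolate the real work in an auxiliary claim about combining three basic forms under the conditional. For the base cases, $\tr$ and $\fa$ are already basic forms, and an atom $a$ is handled by \eqref{CP3}, since $\CP\vdash a=\tr\lef a\rig\fa$ and the right-hand side is a basic form. The only interesting case is $P=P_1\lef P_2\rig P_3$: the induction hypothesis yields basic forms $Q_1,Q_2,Q_3$ with $\CP\vdash P_i=Q_i$, so that $\CP\vdash P=Q_1\lef Q_2\rig Q_3$, and it remains to rewrite $Q_1\lef Q_2\rig Q_3$ into a basic form.

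I would record this remaining task as a separate claim: \emph{for all basic forms $Q_1,Q_2,Q_3$ there is a basic form $Q$ with $\CP\vdash Q_1\lef Q_2\rig Q_3=Q$}. The naive attempt to prove this by plain structural induction on the whole term fails, because the decisive axiom \eqref{CP4} duplicates the outer arguments $Q_1$ and $Q_3$ and does not make the term smaller in any obvious syntactic sense; this is the crux of the argument.

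The fix, and the key idea, is to run the induction not on the whole term but on the structure (number of nodes) of the \emph{middle} argument $Q_2$, treating $Q_1$ and $Q_3$ as arbitrary basic forms carried along. There are three cases according to the shape of the basic form $Q_2$. If $Q_2=\tr$, then \eqref{CP1} gives $Q_1\lef\tr\rig Q_3=Q_1$, a basic form; if $Q_2=\fa$, then \eqref{CP2} gives $Q_3$, again a basic form. In the remaining case $Q_2=Q_2'\lef a\rig Q_2''$ with $a\in A$ and $Q_2',Q_2''$ basic forms, \eqref{CP4} rewrites the term as
\[
Q_1\lef(Q_2'\lef a\rig Q_2'')\rig Q_3=(Q_1\lef Q_2'\rig Q_3)\lef a\rig(Q_1\lef Q_2''\rig Q_3).
\]
Now each of the two outer arguments has $Q_2'$, respectively $Q_2''$, as its middle argument, and these are proper subterms of $Q_2$, hence strictly smaller. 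So the induction hypothesis applies and yields basic forms $R'$ and $R''$ with $\CP\vdash Q_1\lef Q_2'\rig Q_3=R'$ and $\CP\vdash Q_1\lef Q_2''\rig Q_3=R''$. The decisive point is that the top condition has now become the atom $a$, so $R'\lef a\rig R''$ is itself a basic form, which settles the claim.

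I expect the main obstacle to be exactly the observation made above: the induction must be organised around the middle argument rather than around the whole term. Once the measure is chosen correctly, \eqref{CP4} simultaneously shrinks the guard to a proper subterm and exposes an atom in condition position, which is precisely the shape required by the grammar for $\BF$; the constant cases are then immediate from \eqref{CP1} and \eqref{CP2}, and the main statement follows by feeding the three inductively obtained basic forms into this claim.
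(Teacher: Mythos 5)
Your proof is correct and takes essentially the same route as the proof the paper relies on for this lemma (it is recalled from~\cite{BP15}): structural induction on $P$, with the conditional case settled by a nested induction on the middle argument, where \eqref{CP1}/\eqref{CP2} handle the constant guards and \eqref{CP4} pushes the guard's root atom into condition position so that the two inductively obtained basic forms recombine into a basic form. Your key organisational idea is also exactly what the paper's basic form function $\baf$ encodes, since its clause $\baf(P\lef Q\rig R)=\baf(Q)[\tr\mapsto\baf(P),\fa\mapsto\baf(R)]$ performs this leaf substitution along the structure of the middle argument.
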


\begin{definition}
\label{def:bf}
Given $Q,R\in\BF$, the auxiliary function $[\tr\mapsto Q, \fa\mapsto R]:\BF\to\BF$ 
for which postfix 
notation $P[\tr\mapsto Q, \fa\mapsto R]$ is used,
is defined as follows:
\begin{align*}
\tr[\tr\mapsto Q, \fa\mapsto R]&=Q,\\
\fa[\tr\mapsto Q, \fa\mapsto R]&=R,\\
(P_1\lef a\rig P_2)[\tr\mapsto Q, \fa\mapsto R]
&=P_1[\tr\mapsto Q, \fa\mapsto R]\lef a\rig P_2[\tr\mapsto Q, \fa\mapsto R].
\end{align*}
The \textbf{basic form function} $\baf:\PS \to\BF$ 
is defined as follows:
\begin{align*}
\baf(\tr)&= \tr,\\
\baf(\fa)&= \fa,\\
\baf(a)&=\tr\lef a\rig \fa\qquad\text{for all $a\in A$},\\
\baf(P \lef Q\rig R)&= \baf(Q)[\tr\mapsto \baf(P), \fa\mapsto \baf(R)].
\end{align*}
\end{definition}

The following lemma implies that $\baf()$ is a normalisation function; both statements 
easily follow by structural induction.

\begin{lemma}
\label{la:bf}
For all $P\in\PS$, $\baf(P)$ is a basic form, and
for each basic form $P$, $\baf(P)=P$.
\end{lemma}

\begin{definition}
\label{def:freevca}
The binary relation $=_\baf$ on \PS\
is defined as follows: 
\[P=_\baf Q~\iff~\baf(P)=\baf(Q).\]
\end{definition}

\begin{lemma}
\label{la:rephrase}{ }
The relation $=_\baf$ is a congruence relation.
\end{lemma}

Before proving that \CP\ is an axiomatization of the relation $=_\baf$,
we show that each closed instance of axiom~\eqref{CP4} satisfies $=_\baf$.

\begin{lemma}
\label{la:nieuw}
For all $P,P_1,P_2, Q_1, Q_2\in\PS$,
\begin{align*}
\baf(Q_1\lef(P_1\lef P\rig P_2)\rig Q_2)&
=\baf((Q_1\lef P_1\rig Q_2)\lef P\rig(Q_1\lef P_2\rig Q_2)).
\end{align*}
\end{lemma}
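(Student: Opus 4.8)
The plan is to recognise that the two terms in the statement are exactly the instances of the left- and right-hand sides of axiom~\eqref{CP4} obtained by substituting $Q_1,P_1,P,P_2,Q_2$ for $x,y,z,u,v$; so the lemma asserts that every closed instance of~\eqref{CP4} is identified by the normalisation map $\baf()$. First I would unfold the recursive clause $\baf(X\lef Y\rig Z)=\baf(Y)[\tr\mapsto\baf(X),\fa\mapsto\baf(Z)]$ on both sides. By Lemma~\ref{la:bf} the terms $\baf(P),\baf(P_1),\baf(P_2),\baf(Q_1),\baf(Q_2)$ are basic forms, so, writing $B=\baf(P)$, $B_i=\baf(P_i)$ and $C_j=\baf(Q_j)$, the left-hand side unfolds to $\bigl(B[\tr\mapsto B_1,\fa\mapsto B_2]\bigr)[\tr\mapsto C_1,\fa\mapsto C_2]$, whereas the right-hand side unfolds to $B\bigl[\tr\mapsto B_1[\tr\mapsto C_1,\fa\mapsto C_2],\ \fa\mapsto B_2[\tr\mapsto C_1,\fa\mapsto C_2]\bigr]$. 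Thus the entire lemma reduces to one ``composition of substitutions'' identity for the map $[\tr\mapsto\,\cdot\,,\fa\mapsto\,\cdot\,]$ on $\BF$.

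I would isolate that identity as an auxiliary claim: for all $B,B_1,B_2,C_1,C_2\in\BF$, writing $B_i'=B_i[\tr\mapsto C_1,\fa\mapsto C_2]$,
\[
\bigl(B[\tr\mapsto B_1,\fa\mapsto B_2]\bigr)[\tr\mapsto C_1,\fa\mapsto C_2]=B[\tr\mapsto B_1',\fa\mapsto B_2'],
\]
and prove it by structural induction on $B$ following the grammar of Definition~\ref{def:basic}. In the base case $B=\tr$ both sides reduce to $B_1'$ and for $B=\fa$ both reduce to $B_2'$, directly from the defining clauses of the substitution function. In the inductive case $B=D_1\lef a\rig D_2$ the substitution function distributes over $\lef a\rig$, so each side becomes a term of the form $\cdots\lef a\rig\cdots$; the induction hypothesis applied to $D_1$ and to $D_2$ then shows the two $\lef a\rig$-branches agree.

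With the claim in hand, the lemma follows by reinstating $B=\baf(P)$, $B_i=\baf(P_i)$, $C_j=\baf(Q_j)$ and reading the two unfoldings above. I do not expect a genuine obstacle: the content is entirely the associativity-style interaction of nested substitutions, and the only thing to get right is the order in which the outer substitution $[\tr\mapsto C_1,\fa\mapsto C_2]$ is pushed through the inner one, which is exactly the ``reassociation'' that~\eqref{CP4} performs. Since basic forms are finite trees and the substitution map is given by a clean structural recursion, the induction closes without complication.
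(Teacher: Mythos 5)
Your proposal is correct and follows essentially the same route as the paper: both unfold the defining clause $\baf(X\lef Y\rig Z)=\baf(Y)[\tr\mapsto\baf(X),\fa\mapsto\baf(Z)]$ to reduce the lemma to a composition-of-substitutions identity, and then prove that identity by structural induction on the basic form $\baf(P)$, with base cases $\tr,\fa$ and the $\lef a\rig$ case handled by distributing the substitution over the branches. The only (cosmetic) difference is that you isolate the identity as a standalone claim over arbitrary $B,B_1,B_2,C_1,C_2\in\BF$, whereas the paper states it directly in terms of $\baf(P),\baf(P_i),\baf(Q_j)$.
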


\begin{proof}
By definition, the lemma's statement is equivalent with 
\begin{align}
\nonumber
\big(\baf(P)&[\tr\mapsto \baf(P_1),\fa\mapsto\baf(P_2)]\big)\;[\tr\mapsto\baf(Q_1),\fa\mapsto\baf(Q_2)]\\
\nonumber 
&=\baf(P)[\tr\mapsto\baf(Q_1\lef P_1\rig Q_2),\fa\mapsto\baf(Q_1\lef P_2\rig Q_2)].
\end{align}
We prove this 
by structural induction on the form that $\baf(P)$ can have.
If $\baf(P)=\tr$, then 
\begin{align*}
\big(\tr&[\tr\mapsto \baf(P_1),\fa\mapsto\baf(P_2)]\big)\;[\tr\mapsto\baf(Q_1),\fa\mapsto\baf(Q_2)]\\
&=\baf(P_1)[\tr\mapsto\baf(Q_1),\fa\mapsto\baf(Q_2)] \\
&=\baf(Q_1\lef P_1\rig Q_2)\\
&=\tr[\tr\mapsto\baf(Q_1\lef P_1\rig Q_2),\fa\mapsto\baf(Q_1\lef P_2\rig Q_2)].
\end{align*}
The case $\baf(P)=\fa$ follows in a similar way. 

The inductive case $\baf(P)= R_1\lef a \rig R_2$ is trivial (by Definition~\ref{def:bf}). 
\end{proof}

\begin{theorem}
\label{thm:1a}
For all $P,Q\in\PS$, 
\(\CP\vdash P=Q~\iff~ P=_\baf Q.\)
\end{theorem}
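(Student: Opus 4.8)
The plan is to prove the two implications separately: soundness ($\Rightarrow$) and completeness ($\Leftarrow$), using the normalisation machinery already set up.

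For soundness I would show that $=_\baf$ contains every closed instance of the four axioms. The instances of \eqref{CP1}, \eqref{CP2} and \eqref{CP3} follow immediately by unfolding Definition~\ref{def:bf}: for example $\baf(P\lef\tr\rig Q)=\tr[\tr\mapsto\baf(P),\fa\mapsto\baf(R)]=\baf(P)$ gives \eqref{CP1}, and $\baf(\tr\lef P\rig\fa)=\baf(P)[\tr\mapsto\tr,\fa\mapsto\fa]=\baf(P)$ gives \eqref{CP3}. For \eqref{CP4} the work is already finished: Lemma~\ref{la:nieuw} is exactly the assertion that each closed instance of \eqref{CP4} lies in $=_\baf$. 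Now, because $=_\baf$ is a congruence by Lemma~\ref{la:rephrase} and every element of the quotient is the class of a closed term, the quotient algebra $\PS/{=_\baf}$ is a model of \CP; hence every \CP-provable equation between closed terms holds there, i.e.\ $\CP\vdash P=Q$ implies $P=_\baf Q$.

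For completeness I would first establish the auxiliary fact that $\CP\vdash P=\baf(P)$ for every $P\in\PS$. This follows by combining the earlier results with the soundness just proved: by Lemma~\ref{la:2.5} there is a basic form $Q$ with $\CP\vdash P=Q$; soundness then yields $\baf(P)=\baf(Q)$, and since $Q$ is a basic form, $\baf(Q)=Q$ by Lemma~\ref{la:bf}. Hence $\baf(P)=Q$ and therefore $\CP\vdash P=\baf(P)$. With this in hand, completeness is immediate: if $P=_\baf Q$, i.e.\ $\baf(P)=\baf(Q)$, then $\CP\vdash P=\baf(P)=\baf(Q)=Q$ by transitivity.

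The only genuinely substantive point is the soundness of \eqref{CP4}, and that is precisely what Lemma~\ref{la:nieuw} delivers, so I expect no serious obstacle here. The one place requiring care is making the soundness closure argument precise, which the quotient-algebra formulation handles cleanly. As an alternative to routing through Lemma~\ref{la:2.5}, one could prove $\CP\vdash P=\baf(P)$ directly by structural induction on $P$, using \eqref{CP3} for the atom case and, for the conditional case, the sublemma $\CP\vdash B_1\lef B_2\rig B_3=B_2[\tr\mapsto B_1,\fa\mapsto B_3]$ for basic forms $B_1,B_2,B_3$, shown by induction on $B_2$ with \eqref{CP1} and \eqref{CP2} at the leaves and \eqref{CP4} at the internal-node step; but reusing Lemma~\ref{la:2.5} keeps the argument shorter.
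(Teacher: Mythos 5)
Your proposal is correct and follows essentially the same route as the paper: soundness by combining the congruence property (Lemma~\ref{la:rephrase}) with the fact that all closed instances of the axioms satisfy $=_\baf$ (Lemma~\ref{la:nieuw} handling \eqref{CP4}), and completeness by normalising both sides via Lemma~\ref{la:2.5}, using Lemma~\ref{la:bf} and the already-proved soundness direction to identify the two basic forms. The only blemish is a typo in your \eqref{CP1} computation, where $\baf(R)$ should read $\baf(Q)$; your intermediate formulation $\CP\vdash P=\baf(P)$ and your explicit quotient-algebra phrasing of soundness are just slightly more detailed packagings of the paper's own argument.
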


\begin{proof}
$(\Rightarrow)$ By Lemma~\ref{la:rephrase}, 
$=_\baf$ is a congruence relation and it easily follows
that closed instances of the \CP-axioms $\eqref{CP1}-\eqref{CP3}$
satisfy $=_\baf$. By Lemma~\ref{la:nieuw}, closed instances of 
axiom~\eqref{CP4} also satisfy $=_\baf$.

\noindent$(\Leftarrow$) Assume $P=_\baf Q$. According to Lemma~\ref{la:2.5},
there exist basic forms
$P'$ and $Q'$ such that $\CP\vdash P=P'$ and $\CP\vdash Q=Q'$, so by $(\Rightarrow)$, 
$P'=_\baf Q'$ and thus $P'=Q'$. Hence, $\CP\vdash P=P'= Q'=Q$.
\end{proof}

The relation $=_\baf$ coincides with \emph{free valuation congruence}, which is
in~\cite{BP11} defined in terms of valuation algebras, and
in~\cite{BP15} in terms of evaluation trees. Basic forms 
have a 1-1-relation with evaluation trees, as shown by their pictorial representation  
in Example~\ref{ex1}. Evaluation trees were introduced
in~\cite{Daan} by a function $\mathrm{CE}()$ that assigns these trees to 
closed terms: the function $\mathrm{CE}()$ is very
comparable with the basic form function $\baf()$. 

We now present definitions of the left-sequential variants of the common propositional connectives.
The connective ${\leftand}$ is called \emph{left-sequential conjunction}, 
and the little circle in its symbol prescribes that the left-argument is evaluated 
first and that evaluation stops if it yields \false. This evaluation strategy
is called \emph{short-circuit evaluation}: 
evaluation stops as soon as the evaluation result is known.

We define the signature 
\(
\SigSCL=\{\leftand,\leftor,\neg,\tr,\fa,a\mid a\in A\}
\) 
where SCL stands for short-circuit logic.
Negation and sequential conjunction are defined in terms of the conditional connective: 
\begin{align}
\label{negdef}
\neg x &=\fa\lef x\rig \tr,\\
\label{anddef}
x\leftand y &=y\lef x\rig \fa.
\end{align}
Left-sequential disjunction
${\leftor}$ is defined by the following axiom:
\begin{equation}
\label{ordef}
x\leftor y=\neg(\neg x\leftand\neg y).
\end{equation}
Note that $\leftand$ and $\leftor$ are each others duals, thus 
$(P\leftand Q)^\dl=P^\dl\leftor Q^\dl$, and 
$(\neg P)^\dl=\neg (P^\dl)$.

Next, we extend $\CP$ with the
equations~\eqref{negdef}, \eqref{anddef}, and \eqref{ordef}, notation 
\[\CP(\neg,\leftand,\leftor).\] 
The following equations are easily proved in 
$\CP(\neg,\leftand,\leftor)$:
\begin{align}
\label{eq:eq0}
\neg\tr &=\fa,\\
\label{eq:eq2}
x\leftor y&=\tr\lef x\rig y,\\
\label{eq:eq1}
\neg\neg x&=x,
\end{align}
for example,
\begin{align*}
x\leftor y
&=\fa\lef((\fa\lef y \rig\tr)\lef (\fa\lef x\rig\tr)\rig\fa)\rig\tr
&&\text{by $\eqref{negdef} - \eqref{ordef}$}\\
&=\fa\lef(\fa\lef x\rig(\fa\lef y \rig\tr)) \rig\tr
&&\text{by \eqref{CP4}}\\
&=(\fa\lef\fa\rig\tr)\lef x\rig(\fa\lef(\fa\lef y \rig\tr) \rig\tr)
&&\text{by \eqref{CP4}}\\
&=\tr\lef x\rig y.
&&\text{by $\eqref{CP1} - \eqref{CP4}$}
\end{align*}

For the signature $\SigMix=\SigCP\cup\SigSCL$, let $\Terms$ be the set of all its closed terms. 
\begin{definition}
\label{def:mix}
The domain of the function $\baf()$ 
(Def.\ref{def:bf}) 
is extended
to \Terms\ as follows:
\\[2mm]\indent
\(
\baf(\neg P)=\baf(\fa\lef P\rig\tr),
~~\baf(P\leftand Q)=\baf(Q\lef P\rig\fa), 
~~\baf(P\leftor Q)=\baf(\tr\lef P\rig Q).
\)
\\[2mm]
The relation $=_\baf$ (Def.\ref{def:freevca}) is extended to \Terms.
\end{definition}
It follows 
easily that $=_\baf$ is a congruence and that
Theorem~\ref{thm:1a} can be generalised.
\begin{theorem}
\label{thm:1ab}
For all $P,Q\in\Terms$, 
\(\CP(\neg,\leftand,\leftor)\vdash P=Q~\iff~ P=_\baf Q.\)
\end{theorem}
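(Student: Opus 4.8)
The plan is to follow the proof of Theorem~\ref{thm:1a} almost verbatim, adding only what is needed to accommodate the three new connectives. Both directions rest on the same two facts as before: that $=_\baf$ is a congruence on $\Terms$ (stated just above, and established by the usual structural induction, which also confirms that the extended $\baf()$ still lands in $\BF$), and that every closed term reduces to a basic form.

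For the soundness direction $(\Rightarrow)$, since $=_\baf$ is a congruence on $\Terms$ it suffices to check that every closed instance of each axiom of $\CP(\neg,\leftand,\leftor)$ satisfies $=_\baf$. The instances of \eqref{CP1}--\eqref{CP4} are handled exactly as in Theorem~\ref{thm:1a}: those computations (in particular the proof of Lemma~\ref{la:nieuw}, which does induction on the possible forms of $\baf(P)$) only manipulate the basic-form values $\baf(\cdot)$, and therefore apply unchanged when the metavariables range over $\Terms$ rather than $\PS$. For the defining equations \eqref{negdef} and \eqref{anddef}, the required identities $\baf(\neg P)=\baf(\fa\lef P\rig\tr)$ and $\baf(P\leftand Q)=\baf(Q\lef P\rig\fa)$ hold by definition (Definition~\ref{def:mix}), so there is nothing to prove.

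The one genuine computation is \eqref{ordef}: I must verify $\baf(\neg(\neg P\leftand\neg Q))=\baf(P\leftor Q)$, where the right-hand side equals $\baf(\tr\lef P\rig Q)$ by Definition~\ref{def:mix}. The clean way is to observe that negation acts on basic forms as leaf-swapping: writing $N(t)$ for the basic form obtained from $t$ by interchanging its leaves $\tr$ and $\fa$, one has $\baf(\neg P)=\baf(P)[\tr\mapsto\fa,\fa\mapsto\tr]=N(\baf(P))$, and $N$ is an involution on basic forms. Unfolding $\baf(\neg(\neg P\leftand\neg Q))$ by Definition~\ref{def:mix} and repeatedly using this identity collapses the two negations and returns $\baf(P)[\tr\mapsto\tr,\fa\mapsto\baf(Q)]$, which is exactly $\baf(\tr\lef P\rig Q)$. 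This is the step I expect to be the main (though still modest) obstacle.

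For the completeness direction $(\Leftarrow)$, I would first extend Lemma~\ref{la:2.5} to $\Terms$: every closed term over $\SigMix$ is provably equal in $\CP(\neg,\leftand,\leftor)$ to a basic form. This goes by structural induction, using \eqref{negdef}, \eqref{anddef} and \eqref{ordef} to rewrite each occurrence of $\neg$, $\leftand$ and $\leftor$ into the conditional, thereby reducing the term to a $\SigCP$-term, to which Lemma~\ref{la:2.5} applies. With this in hand the endgame is identical to Theorem~\ref{thm:1a}: given $P=_\baf Q$, take basic forms $P',Q'$ with $\CP(\neg,\leftand,\leftor)\vdash P=P'$ and $\CP(\neg,\leftand,\leftor)\vdash Q=Q'$; by $(\Rightarrow)$ we get $P'=_\baf Q'$, hence $\baf(P')=\baf(Q')$, and since $P'$ and $Q'$ are basic forms, Lemma~\ref{la:bf} gives $P'=\baf(P')=\baf(Q')=Q'$, so $\CP(\neg,\leftand,\leftor)\vdash P=P'=Q'=Q$.
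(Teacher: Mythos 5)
Your proposal is correct and takes essentially the same route as the paper, which gives no explicit proof but only remarks that $=_\baf$ remains a congruence on \Terms\ and that Theorem~\ref{thm:1a} generalises; your write-up fills in exactly that outline (soundness of the defining equations with \eqref{ordef} as the one genuine computation, the extension of Lemma~\ref{la:2.5} to \Terms, and the same endgame). The leaf-swapping involution argument you give for \eqref{ordef} is sound.
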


\section{Memorising valuation congruence and mem-basic forms}
\label{sec:MVC}
In this section we discuss memorising valuation congruence, a congruence
obtained by extending \CP\ with one axiom. Then we recall a normalisation function
for so-called mem-basic forms.

\smallskip

We extend \CP\ as defined in Table~\ref{tab:CP} with the axiom
\begin{align}
\label{CPmem} 
\tag{CPmem}  
x \lef y \rig (z \lef u \rig (v \lef y \rig w)) &= x \lef y \rig (z \lef u \rig w).
\end{align} 
This axiom expresses that 
the first evaluation value of $y$ is \emph{memorised}. 
We write 
\[\CPmem\]
for this extension of \CP.
The dual of the axiom \eqref{CPmem} is easily derived in \CPmem:
\begin{align*}
((w\lef y\rig v)\lef u\rig z)\lef y\rig x&=x\lef (\fa\lef y\rig\tr)\rig(z\lef 
(\fa\lef u\rig\tr)\rig(v\lef (\fa\lef y\rig\tr)\rig w))\\
&=x\lef (\fa\lef y\rig\tr)\rig(z\lef (\fa\lef u\rig\tr)\rig w)
\hspace{1.2cm}
\text{by~\eqref{CPmem}}\\
&=(w\lef u\rig z)\lef y\rig x,
\end{align*}
so \CPmem\ also satisfies the duality principle: 
\[\text{For all terms $s,t$ over $\SigCP$,
$\CPmem\vdash s=t~\iff~\CPmem\vdash s^{\dl}=t^{\dl}$.}\]
We note that in \CPmem\ other variants of the axiom~\eqref{CPmem} are derivable, such as  
\begin{align}
\label{mem1}
(x \lef y\rig(z\lef u\rig v))\lef u\rig w = (x \lef y \rig z)\lef u \rig w, 
\end{align}
and that \emph{contraction}
is also derivable: 
\begin{align}
\label{con1}
(x\lef y\rig z)\lef y\rig u&=x\lef y\rig u,
\\
\label{con2}
x\lef y\rig (z\lef y\rig u)&=x\lef y\rig u.
\end{align}

We define a proper subset of basic forms with the property that each
propositional statement can be proved equal to such a basic form. 

\begin{definition}
\label{def:membf}
\textbf{Mem-basic
forms over $A$} are inductively defined:
\begin{itemize}
\item \tr\ and \fa\ are mem-basic forms, and
\item For $a\in A$, $P\lef a\rig Q$ is a mem-basic form if $P$
 and $Q$ are mem-basic forms in which $a$ does not occur.
\end{itemize}
We write 
\(\MBF\)
for the set of mem-basic forms over $A$.
\end{definition}

Note that if $A$ is finite, the number of mem-basic forms is also finite.
Mem-basic forms characterise evaluation trees with the property that in 
each path from the root to a leaf, the internal nodes have distinct labels.

The following normalisation function transforms closed terms to basic forms and then strips 
off repeated occurrences of atoms by auxiliary functions
$\ell_a()$ and $\ri_a()$ that can be called `left-$a$-reduction' and `right-$a$-reduction', respectively.

\begin{definition}
\label{def:membfu}
The \textbf{mem-basic form function}
$\membf:\PS\to \MBF$
is defined by
\begin{align*}
\membf(P)&=\memf(\baf(P)).
\end{align*}
The auxiliary function $\memf:\BF\to\BF$ is defined inductively:
\begin{align*}
\memf(\tr)&=\tr,~
\\
\memf(\fa)&=\fa,~
\\
\memf(P\lef a\rig Q)&=\memf(\ell_a(P))\lef a\rig\memf(\ri_a(Q)).
\end{align*}
For $a\in A$, the auxiliary functions
$\ell_a: \BF\to\BF$ and $\ri_a: \BF\to\BF$ 
are defined by
\begin{align*}
&\ell_a(B)=\ri_a(B)=B\text{ if $B\in\{\tr,\fa\}$, and~}
\begin{cases}
\ell_a(P\lef b\rig Q)=
\begin{cases}
\ell_a(P)&\text{if } b=a,\\
\ell_a(P)\lef b\rig \ell_a(Q)&\text{otherwise},
\end{cases}
\\[8mm]
\ri_a(P\lef b\rig Q)=\begin{cases}
\ri_a(Q)&\text{if } b=a,\\
\ri_a(P)\lef b\rig \ri_a(Q)&\text{otherwise}.
\end{cases}
\end{cases}
\end{align*}
\end{definition}

It is not hard to see that $\membf(P)\in\MBF$ for each $P\in\PS$.
As an example we depict the basic form $((\fa\lef a\rig\tr)\lef b\rig\fa)\lef a\rig\fa$
and its \membf-image $(\fa\lef b\rig\fa)\lef a\rig\fa$:
\[
\begin{array}{ll}
\begin{array}{l}
\begin{tikzpicture}[%
level distance=7.5mm,
level 1/.style={sibling distance=30mm},
level 2/.style={sibling distance=15mm},
level 3/.style={sibling distance=7.5mm}
]
\node (a) {$a$}
  child {node (b1) {$b$}
    child {node (c1) {$a$}
      child {node (d1) {$\fa$}} 
      child {node (d2) {$\tr$}}
    }
    child {node (c2) {$\fa$}
    }
  }
  child {node (b2) {$\fa$}
  };
\end{tikzpicture}
\end{array}
&\qquad
\begin{array}{l}
\qquad
\begin{tikzpicture}[%
level distance=7.5mm,
level 1/.style={sibling distance=30mm},
level 2/.style={sibling distance=15mm},
level 3/.style={sibling distance=7.5mm}
]
\node (a) {$a$}
  child {node (b1) {$b$}
    child {node (c1) {$\fa$}
    }
    child {node (c2) {$\fa$}
    }
  }
  child {node (b2) {$\fa$}
  };
\end{tikzpicture}
\\[8mm]
\end{array}\end{array}
\]

\begin{definition}
\label{def:memvc}
The binary relation $=_\membf$ on \PS\
is defined as follows: 
\[P=_\membf Q~\iff~\membf(P)=\membf(Q).\]
\end{definition}

In~\cite[Thm.5.9]{BP15} it is proved that \membf()\ is a normalisation function with the following property:
\begin{theorem}
\label{thm:membf}
For all $P,Q\in\PS$, $\CPmem\vdash P=Q~\iff~ P=_\membf Q$.
\end{theorem}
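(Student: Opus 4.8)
The plan is to mirror the proof of Theorem~\ref{thm:1a}, systematically replacing $=_\baf$ by $=_\membf$ and $\baf()$ by $\membf()$, and to isolate the extra work caused by the single new axiom~\eqref{CPmem}.

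For the implication $(\Rightarrow)$ (soundness) I would first argue that $=_\membf$ is a congruence on $\PS$. This can be taken from the coincidence of $=_\membf$ with the semantically defined memorising valuation congruence, or obtained self-containedly through auxiliary lemmas showing that the $\memf$-normal form of a basic-form substitution depends only on the mem-basic forms of its constituents, so that $\membf(P_1\lef P_2\rig P_3)$ is determined by $\membf(P_1)$, $\membf(P_2)$ and $\membf(P_3)$. Given the congruence, soundness reduces to checking that every closed axiom instance satisfies $=_\membf$. For~\eqref{CP1}--\eqref{CP4} this is immediate: since $\baf(P)=\baf(Q)$ implies $\membf(P)=\membf(Q)$, every $=_\baf$-equality is a $=_\membf$-equality, and these four axioms already satisfy $=_\baf$ by Theorem~\ref{thm:1a}. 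Only~\eqref{CPmem} needs a fresh verification, i.e.\ a direct computation of $\membf$ on each of its sides, the memorising analogue of Lemma~\ref{la:nieuw}; the point is that both sides yield identical mem-basic forms once the repeated, memorised occurrence of the condition has been stripped off by $\memf$.

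For $(\Leftarrow)$ (completeness) the key is the normalisation identity $\CPmem\vdash P=\membf(P)$. I would factor it as $\CPmem\vdash P=\baf(P)$, which holds already in \CP\ (combining Lemma~\ref{la:2.5}, soundness of \CP, and Lemma~\ref{la:bf}), followed by $\CPmem\vdash B=\memf(B)$ for every basic form $B\in\BF$. The latter I would prove by induction on the size of $B$: the cases $B\in\{\tr,\fa\}$ are trivial, and for $B=P\lef a\rig Q$ it suffices, by the induction hypothesis applied to the no-larger basic forms $\ell_a(P)$ and $\ri_a(Q)$, to derive the reduction identity $\CPmem\vdash P\lef a\rig Q=\ell_a(P)\lef a\rig\ri_a(Q)$. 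Once this and $\CPmem\vdash P=\membf(P)$ are in hand the conclusion is immediate: if $P=_\membf Q$ then $\membf(P)$ and $\membf(Q)$ are the same term, so $\CPmem\vdash P=\membf(P)=\membf(Q)=Q$.

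The main obstacle is exactly the reduction identity $\CPmem\vdash P\lef a\rig Q=\ell_a(P)\lef a\rig\ri_a(Q)$, i.e.\ the derivability of the left-$a$- and right-$a$-reductions of Definition~\ref{def:membfu} inside the branches of an $a$-node. By congruence of derivability this splits into the dual claims that, within the true-branch of $a$, every occurrence of $a$ may be replaced by its true-branch, and symmetrically on the false side. When the repeated $a$ sits directly at the root of the branch this is precisely contraction~\eqref{con1} (respectively~\eqref{con2}); the difficulty is that $\ell_a$ and $\ri_a$ must also reach occurrences of $a$ nested beneath nodes labelled by other atoms $b\neq a$. Here~\eqref{CP4} is of no help, since in a basic form every condition is atomic, so the entire burden falls on the memorising laws~\eqref{CPmem} and~\eqref{mem1}, whose shapes only permit deleting a memorised atom at specific depths and branches. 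Overcoming this requires an inner induction propagating the memorised value of $a$ one node at a time through foreign $b$-nodes, repeatedly combining~\eqref{CPmem}, \eqref{mem1}, \eqref{con1} and~\eqref{con2}; checking that these combinations indeed realise $\ell_a$ and $\ri_a$ is the technical heart of the argument.
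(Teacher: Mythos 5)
Your proposal is correct and follows essentially the route the paper itself relies on: the paper does not prove Theorem~\ref{thm:membf} but imports it from \cite[Thm.5.9]{BP15}, and what you describe is precisely that argument, i.e.\ the paper's Theorem~\ref{thm:1a} template transplanted to the memorising setting --- soundness from congruence of $=_\membf$ plus per-axiom verification (with your observation that $\baf(P)=\baf(Q)$ implies $\membf(P)=\membf(Q)$ disposing of \eqref{CP1}--\eqref{CP4}, so only \eqref{CPmem} needs a fresh computation), and completeness from the provable normalisation $\CPmem\vdash P=\membf(P)$. You also correctly locate where the real work sits: the congruence property of $=_\membf$ and the reduction identity $\CPmem\vdash P\lef a\rig Q=\ell_a(P)\lef a\rig \ri_a(Q)$, whose proof indeed proceeds by propagating the memorised value of $a$ through foreign $b$-nodes one level at a time using \eqref{CPmem}, \eqref{mem1}, \eqref{con1} and \eqref{con2} (e.g.\ inserting an auxiliary $a$-node by reading contraction right-to-left, contracting one level further down, and deleting the auxiliary node again), exactly as in the cited proof.
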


From a more general point of view, a mem-basic form represents a  
\emph{decision tree}, that is
a labelled, rooted, binary tree with internal nodes labelled from 
$A$ and leaves labelled from $\{\tr,\fa\}$ such that
for any path from the root to a leaf, the internal nodes receive distinct 
labels~\cite{Moret}.

The relation $=_\membf$ coincides with \emph{memorising valuation congruence}, 
which is in~\cite{BP11} defined in terms of so-called memorising valuation algebras, 
and in~\cite{BP15} in terms of so-called memorising evaluation trees. 
The rightmost tree displayed above exactly represents the memorising evaluation
tree of the conditional statement 
\[((\fa\lef a\rig\tr)\lef b\rig\fa)\lef a\rig\fa,\]
and of course also the memorising evaluation
tree of $(\fa\lef b\rig\fa)\lef a\rig\fa$.

Using Definition~\ref{def:mix}, 
we extend both the domain of the function $\membf()$ 
(Def.\ref{def:membfu}) 
and the relation $=_\membf$ to \Terms.
It follows that $=_\membf$ is a congruence and that Theorem~\ref{thm:membf} can be generalised.
\begin{theorem}
\label{thm:membfb}
For all $P,Q\in\Terms$, 
\(\CPmem(\neg,\leftand,\leftor)\vdash P=Q~\iff~ P=_\membf Q.\)
\end{theorem}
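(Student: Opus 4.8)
The plan is to mirror the structure of the proof of Theorem~\ref{thm:1ab}, which generalised Theorem~\ref{thm:1a} to the extended signature \SigMix\ by extending $\baf()$ and $=_\baf$. Here I must do the analogous lifting of Theorem~\ref{thm:membf} to \Terms\ for the relation $=_\membf$. The two facts I need to establish before invoking the $(\Rightarrow)$/$(\Leftarrow)$ argument are stated in the sentence preceding the theorem: first, that the extended $=_\membf$ is a \emph{congruence} on \Terms, and second, that the extended $\membf()$ remains a normalisation function in the sense that it maps every closed term to a mem-basic form and fixes mem-basic forms.

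First I would verify that $=_\membf$ is a congruence on \Terms. Since $=_\membf$ on \PS\ is already known to be a congruence (it is the kernel of $\membf()$, which is built from $\baf()$; and $=_\baf$ is a congruence by Lemma~\ref{la:rephrase}), the only new obligation is compatibility with the three added operators $\neg,{\leftand},{\leftor}$. By Definition~\ref{def:mix} these are interpreted via $\baf()$ exactly as the defining conditional terms \eqref{negdef}--\eqref{ordef}, so $\membf(\neg P)=\membf(\fa\lef P\rig\tr)$ and similarly for the binary connectives. Compatibility then reduces to the already-known compatibility of the conditional: if $P=_\membf P'$ then $\fa\lef P\rig\tr=_\membf\fa\lef P'\rig\tr$, and likewise for ${\leftand}$ and ${\leftor}$ in each argument. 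So the congruence property is inherited essentially for free.

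Next I would carry out the two directions exactly as in Theorem~\ref{thm:1a}. For $(\Rightarrow)$, I must check that every closed instance of each axiom of $\CPmem(\neg,\leftand,\leftor)$ satisfies $=_\membf$. The conditional axioms \eqref{CP1}--\eqref{CP4} and \eqref{CPmem} are covered by Theorem~\ref{thm:membf} together with the congruence property, and the defining equations \eqref{negdef}, \eqref{anddef}, \eqref{ordef} hold under $=_\membf$ by the very definition of the extended $\membf()$ in Definition~\ref{def:mix}. Hence soundness follows. For $(\Leftarrow)$, I would use that every $P\in\Terms$ can be rewritten, using \eqref{negdef}--\eqref{ordef}, to a term $\widehat P\in\PS$ over the pure conditional signature with $\CPmem(\neg,\leftand,\leftor)\vdash P=\widehat P$ and $P=_\membf\widehat P$; then, assuming $P=_\membf Q$, I obtain $\widehat P=_\membf\widehat Q$ and apply Theorem~\ref{thm:membf} to conclude $\CPmem\vdash\widehat P=\widehat Q$, whence $\CPmem(\neg,\leftand,\leftor)\vdash P=\widehat P=\widehat Q=Q$.

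The main obstacle is not any single hard computation but rather making the reduction to the pure-conditional signature fully rigorous: I must confirm that eliminating $\neg,{\leftand},{\leftor}$ in favour of their conditional definitions both is derivable in the extended theory and is invariant under $=_\membf$. Both points are immediate from Definition~\ref{def:mix}, which fixes $\membf()$ on the new operators precisely as the $\membf$-image of the defining conditional terms, so the elimination map is $=_\membf$-preserving by construction. Since the authors explicitly say this ``follows that $=_\membf$ is a congruence and that Theorem~\ref{thm:membf} can be generalised,'' I expect the intended proof to be a short remark appealing to Definition~\ref{def:mix} and Theorem~\ref{thm:membf}, exactly parallel to the treatment of Theorem~\ref{thm:1ab}, rather than a fresh induction.
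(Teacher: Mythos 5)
Your proposal is correct and matches the paper's intended argument: the paper gives no explicit proof, only the remark that by extending $\membf()$ and $=_\membf$ to \Terms\ via Definition~\ref{def:mix}, the congruence property and Theorem~\ref{thm:membf} generalise, exactly the reduction you carry out (mirroring Theorem~\ref{thm:1ab}). The one small imprecision is your parenthetical justification that $=_\membf$ is a congruence on \PS\ because it is the kernel of $\membf()$ with $=_\baf$ a congruence---kernels of functions need not be congruences; the clean justification is that $=_\membf$ coincides with derivability in \CPmem\ by Theorem~\ref{thm:membf}, which is automatically a congruence.
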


\section{\MSCL, Memorising Short-Circuit Logic}
\label{sec:mscl}
In this section we consider the set \MSCLe\ of equational axioms in 
Table~\ref{tab:MSCL} 
and recall the fact that \MSCLe\ axiomatises \MSCL, that is, memorising short-circuit
logic.

\smallskip

\begin{table}
{
\centering
\rule{1\textwidth}{.4pt}
\begin{align}
\label{Neg}
\tag{Neg}
\fa&=\neg\tr
\\
\label{Or}
\tag{Or}
x\leftor y&=\neg(\neg x\leftand\neg y)
\\ 
\label{Tand}
\tag{Tand}
\tr\leftand x&=x
\\ 
\label{Abs}
\tag{Abs}
x\leftand(x\leftor y)&=x\\
\label{Mem}
\tag{Mem}
(x\leftor y)\leftand z
&=(\neg x\leftand (y\leftand z))\leftor (x\leftand z)
\end{align}
\hrule
}
\caption{$\MSCLe$, a complete, independent set of axioms for \MSCL}
\label{tab:MSCL}
\end{table}

Axioms~\eqref{Neg} and~\eqref{Or} are explained in Section~\ref{sec:2},
and axiom~\eqref{Tand} needs no explanation.
Axiom~\eqref{Abs} is a left-sequential
variant of the absorption law and captures a first aspect of 
memorising valuation congruence: if
$x$ evaluates to \false, then this axiom holds, and if $x$ evaluates to \true, then
its second evaluation does so as well, and prevents evaluation of $y$. 

Axiom~\eqref{Mem} captures another, less obvious aspect of 
memorising valuation congruence: if $x$ evaluates to \true, then $z$ 
determines the evaluation result of both expressions because the evaluation 
result of $x$ is memorised;
if $x$ evaluates to \false, the evaluation result of both expressions is 
determined by $y\leftand z$ because the right disjunct $(x\leftand z)$
also evaluates to \false\ (because the evaluation 
result of $x$ is memorised). 

We define $\CPmem(\neg,\leftand,\leftor)$ as $\CPmem$ extended with the
equations~\eqref{negdef}, \eqref{anddef}, and \eqref{ordef}. 
We note that in $\CPmem(\neg,\leftand,\leftor)$, Hoare's conditional 
connective can be defined:
\begin{align}
\nonumber
(x\leftand y)\leftor (\neg x \leftand z)
&= \tr\lef(y\lef x\rig\fa)\rig(z\lef(\fa\lef x\rig\tr)\rig \fa)
&&\text{by~\eqref{negdef}-\eqref{ordef}}\\
\nonumber
&= \tr\lef(y\lef x\rig\fa)\rig(\fa\lef x\rig z)
&&\text{by~\eqref{CP4}, \eqref{CP2}, \eqref{CP1}}\\
\nonumber
&= (\tr\lef y\rig(\fa\lef x\rig z))\lef x\rig(\fa\lef x\rig z)
&&\text{by~\eqref{CP4}, \eqref{CP2}}\\
\label{eq:defcorres}
&=y\lef x\rig z.
&&\text{by~\eqref{mem1}, \eqref{CP3}, \eqref{con2}}
\end{align}
Furthermore, as a simple example of contraction (equation~\eqref{con1}), axiom \eqref{Abs}
easily follows from $\CPmem(\neg,\leftand,\leftor)$: 
$x\leftand(x\leftor y)=(\tr\lef x\rig y)\lef x\rig\fa=\tr\lef x\rig\fa=x$.

In~\cite{BPS13}, \emph{memorising short-circuit logic}, 
notation \MSCL, is defined as the equational logic that implies 
that part of the equational theory 
of $\CPmem(\neg,\leftand,\leftor)$ that is expressed in $\SigSCL$.
In~\cite{BPS21} it is proved that 
\MSCL\ is axiomatised 
by \MSCLe\ as defined in Table~\ref{tab:MSCL}, thus
\begin{equation}
\label{corres1}
\text{For all (open) terms $s,t$ over $\SigSCL, ~\MSCLe\vdash s=t\iff\MSCL\vdash s=t$.}
\end{equation}
From the correspondence result~\eqref{corres1} and Theorem~\ref{thm:membfb} it follows 
that ${\leftand}$ is not commutative: 
\[\membf(a\leftand \fa)=\membf(\fa\lef a\rig\fa)=\fa\lef a\rig\fa\neq\fa=\membf(\fa\leftand a),\]
and that for closed terms over $\SigSCL$, \MSCLe\ axiomatises $=_\membf$.

\begin{table}
{
\centering
\rule{1\textwidth}{.4pt}
\begin{align}
\tag{F3}
\label{F3}
\neg \neg x &= x
\\
\tag{F5}
\label{F5}
x \leftand \tr &= x
\\
\tag{F6}
\fa \leftand x &= \fa
\\
\tag{F7}
\label{ax5}
(x \leftand y) \leftand z &= x \leftand (y \leftand z) 
\\
\tag{F8}
\neg x \leftand \fa &= x\leftand\fa
\\
\tag{F9}
\label{F9}
(x \leftand \fa)\leftor y &= (x\leftor\tr)\leftand y
\\
\label{F10}
\tag{F10}
(x \leftand y) \leftor (z \leftand \fa) &= (x \leftor (z \leftand \fa)) 
\leftand (y \leftor (z \leftand \fa))
\\[2mm]
\label{C1}
\tag{C1}
x \leftand (y \leftand x) &= x \leftand y
\\
\label{C2}
\tag{C2}
x \leftand (y \leftand \neg x) &= x \leftand (y \leftand \fa)
\\[2mm]
\label{M1}
\tag{M1}
(x \leftand y) \leftor (\neg x \leftand z) 
&= (\neg x \leftor y) \leftand (x \leftor z)
\\
\label{M2}
\tag{M2}
(x \leftand y) \leftor (\neg x \leftand z) 
&= (\neg x \leftand z) \leftor (x \leftand y)
\\
\label{M3}
\tag{M3}
((x\leftand y)\leftor(\neg x\leftand z))\leftand u
&=(x\leftand (y\leftand u))\leftor(\neg x\leftand (z\leftand u))
\\[2mm]
\label{Dis}
\tag{Dis}
x \leftand (y \leftor z) &= (x \leftand y) \leftor (x \leftand z)
\end{align}
\hrule
}
\caption{Some consequences of $\MSCLe$, where~\eqref{Dis} stands for left-distributivity }
\label{tab:consequences}
\end{table}

Some nice and natural consequences of \MSCLe\ are collected in Table~\ref{tab:consequences} (more
of them in~\cite{BPS21}).
Note that with~\eqref{F3}, \eqref{M1} and \eqref{M2} it follows that
\begin{equation}
\label{eq:corres}
(x \leftand y) \leftor (\neg x \leftand z) = (x \leftor z) \leftand (\neg x \leftor y),
\end{equation}
which, in addition to equation~\eqref{eq:defcorres}, 
provides another, simple definition of the conditional connective in terms of the Boolean 
connectives.
Next, with~\eqref{eq:corres}, \eqref{M1} and $\eqref{M3}^\dl$ it follows that
\[((x\leftand y)\leftor(\neg x\leftand z))\leftor u
=(x\leftand (y\leftor u))\leftor(\neg x\leftand (z\leftor u)).\]

As was proved in~\cite{PS18}, the axioms~\eqref{Neg}, \eqref{Or} and \eqref{Tand} from Table~\ref{tab:MSCL} together 
with~\eqref{F3} and $\eqref{F5} - \eqref{F10}$ from Table~\ref{tab:consequences} axiomatise \FSCL\ for closed terms
(where \FSCL\ is the abbreviation of \emph{Free short-circuit logic}, see further Section~\ref{sec:conc}
-- Related work).
In~\cite{PS18}, the names (F1), (F2) and (F4) are used for the axioms~\eqref{Neg}, \eqref{Or} and \eqref{Tand}, respectively.

\section{The left-sequential biconditional connective}
\label{sec:bico}
As is well-known, the biconditional is the logical connective that requires both of its arguments to evaluate
to the same truth value to return true, so that
\[x\leftrightarrow y=(x\wedge y)\vee (\neg x\wedge\neg y).
\] 
In this section we discuss a left-sequential variant of the biconditional connective $\leftrightarrow$.
We write $\ell$IFF (``left-iff") for this connective and use the notation
\[\liff\]
to mark that short-circuit evaluation is prescribed.\footnote{%
  As mentioned before, the  circle in the $\liff$ symbol indicates that the left argument must 
  be evaluated first and prescribes short-circuit evaluation.
  Observe that to determine the value of the expression $x\liff y$,
  the second argument $y$ must always be evaluated.
  For left-sequential conjunction this is different, the variant of left-sequential conjunction 
  that prescribes so-called \emph{full} sequential evaluation (as opposed to short-circuit evaluation) has
  notation ${\fulland}$ and always evaluates both conjuncts from left to right (indicated by the black circle),
  and can be defined by $x \fulland y = (x \leftor (y \leftand \fa)) \leftand y$, or, by
  $x \fulland y = y \lef x \rig (\fa \lef y \rig \fa)$ (see~\cite{Daan}).
} 
We provide axioms for this extension, prove a correspondence result, and briefly discuss duality.

\smallskip

In the setting of Hoare's ternary conditional connective, $\ell$IFF is easy to define.

\begin{definition}
\label{def:liff} The connective $\liff$ is in \CP\ defined by the axiom
$x\liff y=y\lef x\rig (\fa\lef y\rig\tr)$.
\end{definition}
So, $x\liff y$ only evaluates to \true\ if $x$ and $y$ evaluate to the same truth value.
In $\CPmem(\neg,\leftand,\leftor)$ it follows with equation~\eqref{eq:defcorres}, i.e.,
$x\lef y\rig z = (y\leftand x)\leftor(\neg y\leftand z)$, 
that
\[x\liff y=(x\leftand y)\leftor(\neg x\leftand\neg y).
\]

\begin{table}
\hrule
\begin{align*}
x\leftor y&=\neg(\neg x\leftand\neg y)
\tag{\ref{Or}}
\\[0mm]
x\leftand (x\leftor y)&=x
\tag{\ref{Abs}}
\\[0mm]
\label{Assoc}
\tag{Assoc}
(x\leftand y)\leftand z&=x\leftand(y\leftand z)
\\[2mm]
\label{Iff1}
\tag{Tx}
\tr\liff x&=x
\\[0mm]
\label{Iff2}
\tag{xF}
x\liff \fa&=\neg x
\\[0mm]
\label{Iff3}
\tag{AndIff}
(x\leftand y)\liff z&=(x\leftand(y\liff z))\leftor (\neg x\leftand \neg z)
\end{align*}
\hrule
\caption{\MSCLea, a complete, independent set of axioms for \MSCLea}
\label{tab:leftIFF}
\end{table}

Memorising valuation congruence is axiomatised by \MSCLea, the set of axioms in Table~\ref{tab:leftIFF}
over the signature $\SigIff=\SigSCL\cup\{\liff\}$. 
The axiomatisation result for \MSCL\ mentioned in equation~\eqref{corres1} gives rise to the following result.
\begin{theorem}
\label{thm:c1}
~\\[1mm]
$1.$ $\MSCLea \vdash \MSCLe$, 
\\[1mm]
$2.$ $\MSCLe\cup\{x\liff y=(x\leftand y)\leftor (\neg x\leftand \neg y)\}\vdash \MSCLea$, 
\\[1mm]
$3.$ The axioms of \MSCLea\ in Table~\ref{tab:leftIFF} are independent.
\end{theorem}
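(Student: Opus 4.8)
The plan is to treat parts~1 and~2 as the two directions of a mutual equational derivation, showing that \MSCLea\ and \MSCLe\ (the latter augmented with the defining equation for $\liff$) prove exactly the same equations over their respective signatures, and to treat part~3 separately by exhibiting separating models. Throughout I would work purely syntactically, since the subsequent correspondence result for \MSCLea\ is expected to rely on this theorem and a semantic route would risk circularity.

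For part~1 I would derive each of the five axioms of \MSCLe\ from \MSCLea. Two are immediate, since \eqref{Or} and \eqref{Abs} occur verbatim among the axioms of \MSCLea. Axiom \eqref{Neg} follows in one step by combining the instance $\tr\liff\fa=\fa$ (from \eqref{Iff1}) with the instance $\tr\liff\fa=\neg\tr$ (from \eqref{Iff2}), whence $\fa=\neg\tr$. The substantive work is deriving \eqref{Tand} and \eqref{Mem}, for which the link between $\liff$ and the Boolean connectives is available only through \eqref{Iff3}. Here I would first bootstrap a small stock of auxiliary identities in a deliberate order: double negation $\neg\neg x=x$ and $\neg\fa=\tr$ (by iterating \eqref{Iff2} and using \eqref{Neg}), then the absorbing law $\fa\leftand x=\fa$, and only then \eqref{Tand} itself; each of these is extracted by feeding a suitable instance into \eqref{Iff3} so that the $\liff$-occurrences can be eliminated. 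With \eqref{Tand} established, \eqref{Mem} is obtained by instantiating \eqref{Iff3} and \eqref{Iff1} and simplifying with \eqref{Assoc} and \eqref{Abs}.

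For part~2 I would assume \MSCLe\ together with $x\liff y=(x\leftand y)\leftor(\neg x\leftand\neg y)$ and derive the six axioms of \MSCLea. Once more \eqref{Or} and \eqref{Abs} are already axioms of \MSCLe, and \eqref{Assoc} is the known consequence \eqref{ax5}. The remaining three are proved by expanding every $\liff$ via the defining equation and reducing in \MSCLe. For \eqref{Iff1} this yields $\tr\liff x=(\tr\leftand x)\leftor(\neg\tr\leftand\neg x)$, which collapses to $x$ using \eqref{Tand}, \eqref{Neg}, (F6) and the consequence $x\leftor\fa=x$; for \eqref{Iff2} one expands $x\liff\fa$ and applies (F8) together with a right-absorption consequence of \MSCLe\ to reach $\neg x$. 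The one genuine calculation is \eqref{Iff3}: after expanding both sides, the identity reduces to a distribution and rearrangement of nested conjunctions and disjunctions that is settled by left-distributivity \eqref{Dis} and the mem-laws \eqref{M1}--\eqref{M3}.

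For part~3 I would establish independence in the standard way: for each of the six axioms of \MSCLea\ I exhibit a finite algebra over $\SigIff$ that validates the other five axioms but refutes the chosen one, so that no axiom is derivable from the rest. These are exactly the counter-models produced by \emph{Mace4}, the only subtlety being that $\liff$ must be interpreted consistently with whichever interpretations of $\leftand$, $\leftor$ and $\neg$ the model uses, so the model search ranges over all connectives simultaneously. The main difficulty of the whole theorem lies in parts~1 and~2, specifically in finding the correct order of the bootstrapping derivations for \eqref{Tand} and \eqref{Mem} and in the single distributive reduction behind \eqref{Iff3}; this is precisely where the machine search by \emph{Prover9} does the heavy lifting.
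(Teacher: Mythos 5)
Your proposal is correct and follows essentially the same route as the paper: part 1 by getting \eqref{Neg} immediately from \eqref{Iff1} and \eqref{Iff2} and then bootstrapping $\neg\neg x = x$, \eqref{Tand} and \eqref{Mem} in that order, part 2 by expanding $\liff$ via its defining equation and reducing with known \MSCLe-consequences, and part 3 by finite counter-models found with Mace4 -- with all substantive derivations delegated, exactly as in the paper, to Prover9. The only slip is your parenthetical hint that $\neg\neg x=x$ and $\neg\fa=\tr$ follow ``by iterating \eqref{Iff2}'': iterating \eqref{Iff2} only rewrites $\neg\neg x$ as $(x\liff\fa)\liff\fa$, which cannot be collapsed to $x$ without further machinery (associativity of $\liff$ is a hard consequence, not an axiom), and indeed the paper gives no such short derivation but simply reports that DNS ``follows with Prover9''; since you, like the paper, ultimately rely on the machine search for these lemmas, this does not affect the overall argument.
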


\begin{proof}~
\\[-8mm]
\begin{enumerate}
\item
Axiom~\eqref{Neg}, thus $\fa=\neg\tr$, follows immediately from 
axioms~\eqref{Iff1} and \eqref{Iff2}.

An auxiliary result is $\neg\neg x=x$ (DNS, the double negation shift),
which follows with Prover9.

Axiom~\eqref{Tand}, that is $\tr\leftand x=x$, follows with Prover9 (faster if DNS 
is added to \MSCLea).

Axiom~\eqref{Mem}, that is $(x\leftor y)\leftand z
=(\neg x\leftand (y\leftand z))\leftor (x\leftand z)$, follows with Prover9:
if DNS and~\eqref{Tand} are added to \MSCLea, a proof
with option kbo is relatively fast. 

\item With Prover9 (immediate).

\item Independence of \MSCLea\ follows immediately with Mace4. The only case to be mentioned is the 
independence of axiom~\eqref{Assoc}: a counter model (with domain size 8) is quickly 
generated by Mace4 if (DNS) is added to $\MSCLea\setminus\{\eqref{Assoc}\}$. 
\end{enumerate}
\vspace{-3mm}
\end{proof}

Some nice consequences of \MSCLea, all checked with Prover9: \footnote{%
  For the last consequence, a run in Prover9 with the extra assumption 
  $\neg\neg x=x$ and options kbo and fold is relatively fast.}
\begin{align*}
\tr \liff x &= x\liff \tr\\
\fa\liff x &= x\liff \fa\\
\neg (x\liff  y)&=x\liff \neg y \\
\neg x\liff \neg y&=x\liff y 
\\
(x\leftor\tr)\liff y&= (x\leftor\tr)\leftand y.\\
(x\liff y)\liff z&= x\liff (y\liff z)
\end{align*}
With reference to correspondence result~\eqref{corres1}, we note that these consequences can also be easily proved in $\CPmem(\neg,\leftand,\leftor)$
extended with the equation $x\liff y=y\lef x\rig \neg y$.
A consequence of $\MSCLea$ that is more difficult to prove with Prover9,\footnote{%
  With all other \MSCLe-axioms added to \MSCLea, a run in Prover9 with options kbo and fold 
  required 112 CPU seconds.}
is recorded in~\cite[dual of axiom (AX4)]{Cor20}:
\[
(x\liff y)\leftand (z\leftand \fa)=(x\leftor(\neg y\leftand (z\leftand \fa)))\leftand (y\leftand (z\leftand \fa)).
\]
However, there is a simple proof in $\CP(\neg,\leftand,\leftor)$
extended with $x\liff y=y\lef x\rig \neg y$. 

\paragraph{Duality and $\ell$IFF.}
The duality principle can be extended to terms over $\SigIff$ if the left-sequential 
version of the connective \emph{exclusive or} ({XOR}, notation $\oplus$) is added as the dual of $\ell$IFF.  
We write $\ell$XOR (``left-xor") for this connective and use the symbol $\lxor$.  

Define $\SigIffD=\SigIff\cup\{\lxor\}$. 
So, for all $P, Q\in\SigIffD$,
\[P\lxor Q = (P^\dl\liff Q^\dl)^\dl.\]
With the definining axiom for $\liff$, that is, $x\liff y=y\lef x\rig(\fa\lef y\rig\tr)$ we find 
\[P\lxor Q = (Q^\dl\lef P^\dl\rig(\fa\lef Q\rig\tr)^\dl)^\dl=
(\fa\lef Q\rig\tr)\lef P\rig Q,\]
and thus $P\lxor Q=P\liff\neg Q$.
With respect to the signature $\SigIffD$, the axiom
\[x\lxor y=x\liff \neg y\]
defines $\lxor$ and we write \MSCLead\ for the addition of this axiom to $\MSCLea$ (see Table~\ref{tab:leftIFF}).
Hence, in \MSCLead\ it follows that
\[\neg(x\lxor y)=\neg(x\liff \neg y)=x\liff y,\]
so $\liff$ is also ``the negation of $\lxor$".
From \MSCLead\ and the \MSCLe-identities discussed in Section~\ref{sec:mscl}
it easily follows that 
\[x\lxor y=(x\leftand \neg y)\leftor(\neg x \leftand y).\]

\begin{theorem}[Duality]
For all terms s,t over $\SigIffD$,
\[\MSCLead\vdash s=t \iff~\MSCLead\vdash s^\dl=t^\dl.\]
\end{theorem}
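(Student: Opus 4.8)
The plan is to prove the duality principle by the standard self-dual-axiomatisation argument: once we know that the dual of every axiom of $\MSCLead$ is again a theorem of $\MSCLead$, any equational derivation can be dualised step by step. The foundation is that the duality mapping $\dl$, extended by $x^\dl=x$ on variables, is a well-defined involutive bijection on the terms over $\SigIffD$ that sends each connective to its dual: recall from the preceding discussion that $(P\leftand Q)^\dl=P^\dl\leftor Q^\dl$, $(\neg P)^\dl=\neg(P^\dl)$, $\tr^\dl=\fa$, $\fa^\dl=\tr$, $a^\dl=a$, and that $\liff$ and $\lxor$ are each other's duals, so $(P\liff Q)^\dl=P^\dl\lxor Q^\dl$ and $(P\lxor Q)^\dl=P^\dl\liff Q^\dl$. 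Because $x^\dl=x$, the map $\dl$ commutes with substitution: for a substitution $\rho$ and the induced $\rho^\dl$ with $\rho^\dl(x)=(\rho(x))^\dl$, one has $(\sigma\rho)^\dl=\sigma^\dl\rho^\dl$ for every term $\sigma$.

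Granting this, I would first reduce the theorem to a single claim: for each axiom $s=t$ of $\MSCLead$, $\MSCLead\vdash s^\dl=t^\dl$. The reduction runs as follows. Given a derivation of $s=t$, apply $\dl$ to every term occurring in it. Reflexivity, symmetry, transitivity and congruence are preserved, the last because $\dl$ sends each $\SigIffD$-connective to a $\SigIffD$-connective. Each use of an axiom $\sigma=\tau$ under a substitution $\rho$ becomes, by the substitution-commuting property, the equation $\sigma^\dl\rho^\dl=\tau^\dl\rho^\dl$, i.e.\ an instance of the dualised axiom $\sigma^\dl=\tau^\dl$; replacing that single step by the (instantiated) derivation of $\sigma^\dl=\tau^\dl$ guaranteed by the claim yields a derivation of $s^\dl=t^\dl$. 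The converse implication is immediate since $\dl$ is an involution, $(s^\dl)^\dl=s$.

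It then remains to verify the claim for the six axioms of $\MSCLead$ (the five of $\MSCLea$ together with $x\lxor y=x\liff\neg y$). The double negation shift $\neg\neg x=x$, available in $\MSCLea$ and hence in $\MSCLead$, together with the standard $\MSCL$-identities (De Morgan, and associativity and absorption for $\leftor$, derivable from $\MSCLe$ and DNS) and the listed iff-consequences such as $\fa\liff x=\neg x$, dispatches most cases routinely. For instance, the dual of (Or) is the De Morgan law $x\leftand y=\neg(\neg x\leftor\neg y)$; the dual of (Tx) is $\fa\lxor x=x$, which unfolds via the $\lxor$-axiom to $\fa\liff\neg x=\neg\neg x=x$; the dual of (xF) is $x\lxor\tr=\neg x$, unfolding to $x\liff\neg\tr=x\liff\fa=\neg x$; and the dual of the $\lxor$-axiom is $x\liff y=x\lxor\neg y$, again immediate from DNS. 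The duals of (Abs) and (Assoc) are the absorption and associativity laws for $\leftor$.

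The main obstacle is the dual of (AndIff). Computing directly, the dual of $(x\leftand y)\liff z=(x\leftand(y\liff z))\leftor(\neg x\leftand\neg z)$ is the ``OrXor'' law
\[(x\leftor y)\lxor z=(x\leftor(y\lxor z))\leftand(\neg x\leftor\neg z).\]
I would establish this by unfolding every $\lxor$ through $x\lxor y=x\liff\neg y$, pushing the resulting negations inward with DNS and De Morgan and the listed identity $\neg(x\liff y)=x\liff\neg y$, and reducing the goal to (AndIff) itself. Failing a short hand proof, this identity is exactly of the shape the paper elsewhere verifies with Prover9, so I would fall back on a machine-checked derivation as was done for the comparable consequences in Section~\ref{sec:bico}. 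Once this last dual is in hand, the claim holds for all axioms of $\MSCLead$ and the duality principle follows.
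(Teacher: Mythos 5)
Correct, and essentially the paper's own route: the paper states this theorem without an explicit proof, but its implicit justification is precisely your key claim---every dualised axiom of \MSCLead\ is again derivable in \MSCLead---from which duality follows by the dualisation-of-derivations argument the paper already uses for \CP\ and \CPmem, and the remark immediately following the theorem (that the duals of (Tx), (xF) and (AndIff) yield the same equational theory modulo $x\lxor y=x\liff\neg y$) records exactly the nontrivial cases you verify. Your axiom count is off by one (\MSCLea\ has six axioms, so \MSCLead\ has seven), but this is harmless since you do in fact treat all seven duals, and your fallback to Prover9 for the dual of (AndIff) is consistent with the paper's stated methodology of machine-checking all derivations.
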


Of course, we could have started with adding $\ell$XOR to $\SigSCL$ instead of $\ell$IFF.
It appears that taking all duals of the axioms~$\eqref{Iff1} - \eqref{Iff3}$
in Table~\ref{tab:leftIFF}, that is,
\begin{align*}
\fa\lxor x&=x&&\eqref{Iff1}^\dl\\
x\lxor \tr&=\neg x&&\eqref{Iff2}^\dl\\
(x\leftor y)\lxor z&=(x\leftor(y\lxor z))\leftand (\neg x\leftor \neg z)&&\eqref{Iff3}^\dl
\end{align*}
yields an axiomatisation that is also independent and
has the same equational theory (modulo $x\lxor y=x\liff \neg y$).

\section{The left-sequential NAND connective}
\label{sec:nand}
The Sheffer stroke $\mid$, also known as the NAND (not and) connective, requires at least one of its
arguments to be false so that it returns true:
\[x \mid y = \neg(x \wedge y).\]
In this section we discuss a left-sequential variant of the NAND connective. 
We write $\ell$NAND (“left-NAND”) for this connective and use the notation
\[\lnand\]
to mark that a left-sequential evaluation strategy is prescribed. We provide axioms for this
extension and prove a correspondence result.

\smallskip

In~\cite{Cor20}, the connective $\ell\text{NAND}$ is defined in \CP\ by
\[
x \lnand y=(\fa\lef y\rig\tr)\lef x\rig\tr,
\]
and hence satisfies $x \lnand y=\neg (x\leftand y)=\fa\lef(y\lef x\rig\fa)\rig\tr=(\fa\lef y\rig\tr)\lef x\rig\tr$.
Conversely, negation and the sequential connectives can be defined in terms of $\ell$NAND by
\begin{align}
\label{1l}
\tag{1$\ell$}
\neg x &= x \lnand \tr,
\\
\label{2l}
\tag{2$\ell$}
x \leftand y &= (x \lnand y) \lnand \tr,
\\
\label{3l}
\tag{3$\ell$}
x \leftor y &= (x \lnand \tr) \lnand (y\lnand\tr).
\end{align}
To obtain the $\ell$NAND-translation of the EqMSCL-axioms, we apply 
these definitions to each occurrence of the 
$\{\neg, \leftand , \leftor\}$-connectives for every axiom of \MSCLe:
\begin{align}
\fa &= \tr \lnand \tr,
\\
\label{elf}
(x \lnand\tr) \lnand(y \lnand\tr)&=(((x \lnand\tr) \lnand(y \lnand\tr)) \lnand\tr) \lnand\tr,
\\
\label{twaalf}
(\tr\lnand x)\lnand\tr &= x,
\\
\label{dertien}
(x \lnand ((x \lnand\tr) \lnand(y \lnand\tr))) \lnand\tr&=x,
\\
\nonumber
(((x \lnand\tr) \lnand(y \lnand\tr)) \lnand z)\lnand\tr
&=\\
\label{veertien}
((((x \lnand\tr) &\lnand((y \lnand z) \lnand\tr)) \lnand \tr)\lnand\tr)\lnand(((x \lnand z)
\lnand\tr)\lnand\tr).
\end{align}
Next, we simplify the translation equations~$\eqref{elf}-\eqref{veertien}$ as follows:
\begin{align}
\label{vijftien}
(x\lnand\tr)\lnand\tr&=x
\quad\hspace{6.8mm}\text{\small[in~\eqref{elf}, replace $(x \lnand\tr) \lnand (y \lnand\tr)$ by $x$]},
\\[1mm]
\label{zestien}
\tr\lnand x &= x \lnand \tr
\quad
\text{\small[in~\eqref{twaalf}, add $(\_)\lnand\tr$ and apply~\eqref{vijftien}],}
\\[1mm]
\label{zeventien}
x \lnand((x \lnand\tr)\lnand y)&=x \lnand\tr
\\
\nonumber
& 
\text{\small[in~\eqref{dertien}, add $(\_)\lnand\tr$, replace $(y \lnand \tr)$ by $y$, and apply~\eqref{vijftien}]},
\\[1mm]
\label{achttien}
(x \lnand(y \lnand\tr)) \lnand z
&=((x \lnand((y \lnand z) \lnand\tr)) \lnand((x \lnand\tr) \lnand z))\lnand\tr
\\
\nonumber
\quad
\text{\small[in~\eqref{veertien}, add $(\_)\lnand\tr$, replace}
&\text{\small$(x \lnand \tr)$ by $x$ and the rightmost $x$ by $(x \lnand \tr)$, and apply~\eqref{vijftien}].}
\end{align}
Finally, if in equation~\eqref{zeventien} we replace the leftmost $x$ by $(\tr\lnand x)$ and the two occurrences 
$(x\lnand\tr)$ by $x$, then
equations~\eqref{vijftien} and \eqref{zestien} are consequences of the three remaining axioms, 
this follows easily with Prover9.
These three axioms are listed in Table~\ref{tab:nand} and we call this set of axioms $\NANDe$.

\begin{theorem}
\label{thm:c2}
~\\[1mm]
$1.$ $\NANDe\cup\{\eqref{1l}, \eqref{2l}, \eqref{3l}\}
\vdash \MSCLe$, 
\\[1mm]
$2.$ $\MSCLe\cup\{x\lnand y=\neg(x\leftand y)\}\vdash \NANDe$, 
\\[1mm]
$3.$ The axioms of \NANDe\ in Table~\ref{tab:nand} are independent.
\end{theorem}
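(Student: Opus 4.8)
The plan is to follow the template set by the proof of Theorem~\ref{thm:c1}: discharge the two derivability directions mainly with Prover9 and settle independence with Mace4, while carrying out by hand the handful of translation steps that make the correspondence transparent. The essential observation is that the simplification chain \eqref{vijftien}--\eqref{achttien} worked out in the body text is, up to the chosen definitions, reversible, so that the three \NANDe-axioms plus \eqref{1l}--\eqref{3l} carry exactly the same content as the translated \MSCLe-axioms.

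For part~1, I would first record the two preliminary facts \eqref{vijftien}, i.e.\ $(x\lnand\tr)\lnand\tr=x$, and \eqref{zestien}, i.e.\ $\tr\lnand x=x\lnand\tr$, both already noted to be consequences of \NANDe; under \eqref{1l} the former is precisely the double negation shift $\neg\neg x=x$. With these in hand the first three \MSCLe-axioms come out quickly. Axiom~\eqref{Neg} is the \NANDe-axiom $\fa=\tr\lnand\tr$ read through \eqref{1l}. Axiom~\eqref{Or} unfolds via \eqref{1l}--\eqref{3l} into a fourfold application of $(\cdot\lnand\tr)\lnand\tr$ that collapses by \eqref{vijftien}. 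Axiom~\eqref{Tand} unfolds to $(\tr\lnand x)\lnand\tr$, which reduces by \eqref{zestien} and then \eqref{vijftien}. The remaining two axioms \eqref{Abs} and \eqref{Mem} are just the $\{\leftand,\leftor,\neg\}$-readings of the modified \eqref{zeventien} and of \eqref{achttien}, so I would hand these, together with the definitions and the auxiliary identities above, to Prover9.

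For part~2 I would adjoin $x\lnand y=\neg(x\leftand y)$ to \MSCLe\ and derive the three \NANDe-axioms; as in Theorem~\ref{thm:c1}(2) this should be essentially immediate for Prover9. A by-hand sanity check confirms the shape: $\fa=\tr\lnand\tr$ becomes $\fa=\neg(\tr\leftand\tr)=\neg\tr$ via \eqref{F5} and \eqref{Neg}, while the other two \NANDe-axioms reduce to \MSCLe-consequences already collected in Table~\ref{tab:consequences}. For part~3, independence follows from Mace4: for each of the three \NANDe-axioms I would request a finite model validating the other two but refuting it.

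The step I expect to be the main obstacle is the derivation of \eqref{Mem} in part~1, whose \NANDe-encoding \eqref{achttien} is the deepest of the three axioms; recovering \eqref{Mem} forces Prover9 to unwind several nested $\lnand$-applications. As with the analogous derivation in Theorem~\ref{thm:c1}, I anticipate that the search only terminates in reasonable time once the double negation shift \eqref{vijftien} (and likely also \eqref{zestien} and \eqref{Tand}) are supplied as lemmas and a suitable term ordering such as \texttt{kbo} is selected.
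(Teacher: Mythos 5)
Your proposal is correct and, on parts 1 and 2, is essentially the paper's own proof: both derivability claims are handed to Prover9, and your hand-derivations of \eqref{Neg}, \eqref{Tand} and the auxiliary identities \eqref{vijftien} and \eqref{zestien} (with \eqref{vijftien} read as double negation via \eqref{1l}) only make explicit what the paper delegates to the tool; the paper's sole addition in part 1 is the remark that \eqref{3l} may be replaced by \eqref{ordef}. Where you genuinely differ is part 3. You propose running Mace4 directly on \NANDe, producing for each of the three axioms a finite model of the other two that refutes it. The paper instead proves nothing new at this point: it cites Theorem~\ref{thm:c2u}, whose part 3 asserts (via Mace4) that the superset $\NANDeu=\NANDe\cup\{\eqref{NU}\}$ is independent, and independence is inherited downwards: if $\NANDe\setminus\{\alpha\}\vdash\alpha$ for some \NANDe-axiom $\alpha$, then a fortiori $\NANDeu\setminus\{\alpha\}\vdash\alpha$ by monotonicity of equational derivability, contradicting independence of the superset. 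Both routes are sound---indeed any Mace4 witness for \NANDeu\ restricts, by forgetting the interpretation of \und, to a witness for \NANDe---so the difference is one of economy: your version is self-contained and does not depend on the three-valued material of Section~\ref{sec:three}, while the paper's version avoids duplicate model searches by piggy-backing on the stronger independence result it needs anyway.
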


\begin{proof}~
\\[-8mm]
\begin{enumerate}
\item
This follows easily with Prover9. Recall equation \eqref{ordef}: $x\leftor y=\neg(\neg x\leftand\neg y)$. 
We note that 
\[\NANDe\cup\{\eqref{1l}, \eqref{2l}, \eqref{ordef}\}
\vdash \MSCLe\]
also follows easily with Prover9. 
\item
This follows easily with Prover9.
\item
By Theorem~\ref{thm:c2u} (which states that a superset of \NANDe\ is independent).
\vspace{-3mm}
\end{enumerate}
\end{proof}

\begin{table}
\hrule
\begin{align*}
\label{N1}
\tag{N1}
\fa&=\tr\lnand\tr\\[0mm]
\label{N2}
\tag{N2}
(\tr\lnand x)\lnand(x\lnand y)&=x\\[0mm]
\label{N3}
\tag{N3}
(x\lnand (y \lnand \tr)) \lnand z &=  
((x \lnand ( (y \lnand   z) \lnand   \tr)) \lnand   (( x \lnand   \tr)\lnand z)) \lnand   \tr
\end{align*}
\hrule
\caption{$\NANDe$, a set of equational axioms for $\ell$NAND, left-sequential NAND}
\label{tab:nand}
\end{table}

A first consequence of \NANDe\ stems from the characterisation of $y\lef x\rig z$ in \MSCLea,
i.e., $y\lef x\rig z=(x\leftand y)\leftor(\neg x\leftand z)=(\neg x\leftand z)\leftor(x\leftand y)$
(equations~\eqref{eq:defcorres} and~\eqref{M1} from Table~\ref{tab:consequences}). With the $\ell$NAND
definitions of the Boolean connectives we find
\begin{align*}
(x\lnand y)\lnand ((x\lnand\tr)\lnand z)
&=((x\lnand\tr)\lnand z)\lnand(x\lnand y).
\end{align*}
Furthermore, with~\eqref{M1}, \eqref{M2} and $(x\lnand\tr)\lnand\tr=x$ 
it follows that
\[(x\lnand y)\lnand ((x\lnand\tr)\lnand z)=(((x\lnand\tr)\lnand(z\lnand\tr))\lnand(x\lnand(y\lnand\tr)))\lnand\tr.\]

\paragraph{Abbreviation.}
If we write $x'$ for $x\lnand\tr$, terms become more readable.
In the remainder of our discussion about $\lnand$ we will 
mostly use this abbreviation, e.g., $x''=x$. The characterisations of  
the conditional then look like this:
\begin{align}
\nonumber
y\lef x\rig z
&=(x\lnand y)\lnand (x'\lnand z)\\
\nonumber
&=(x'\lnand z)\lnand (x\lnand y)\\
\nonumber
&=((x'\lnand z')\lnand(x\lnand y'))'\\
\label{char}
&=((x\lnand y')\lnand(x'\lnand z'))'.
\end{align}

Some more consequences of \NANDe, all of which can be easily verified with Prover9: \footnote{%
  Verification with Prover9 is fastest if an auxiliary function $f(x)=x\lnand\tr$ is added 
  to the axioms of \NANDe, the options rpo and fold are used,
  and $f()$ and $()'$ are \emph{not} used in these consequences, i.e., the goal formulas use $x\lnand\tr$. 
  Without the addition of $f(x)=x\lnand\tr$ (or $x'=x\lnand\tr$), proofs of \eqref{b1}, \eqref{b4} and
  \eqref{b5} seem not feasible.
  }
\begin{align}
\label{b1}
((x\lnand y)\lnand(x'\lnand z))\lnand v&=(x\lnand (y\lnand v))\lnand(x'\lnand (z\lnand v)),\\
\label{b2}
\fa\lnand x&=\tr,\\
\label{b3}
x'\lnand(x'\lnand\fa)&=x,\\
\label{b4}
x\lnand((x\lnand y)\lnand(x'\lnand z))&=x\lnand y,\\
\label{b5}
x\lnand((y\lnand z)\lnand(y'\lnand w))&=x\lnand((y\lnand (x\lnand z'))\lnand(y'\lnand (x\lnand w'))). 
\end{align}

An advantage of the signature with $\ell$NAND is that
inductive properties are easier to prove than in the case of $\SigSCL$. 
We return to this point in the next section.

\paragraph{Duality and $\ell$NAND.} 
We write $\ell$NOR for the dual connective of $\ell$NAND. 
In ~\cite{Cor20},
the left-sequential version of the NOR connective, notation $\lnor$ (with notation $\downarrow$ for NOR),
is defined by
\[x\lnor y=\fa \lef x \rig (\fa \lef y \rig\tr).\]
Using \CP, it easily follows that $\lnor$ is expressible in terms of $\lnand$ , \tr\ and \fa:
\[x \lnor y=((x \lnand \tr) \lnand (y \lnand \tr)) \lnand\tr
\quad(= (x'\lnand y')', \text{ compare the symmetry with axiom\eqref{Or}}).\]
However, we see no reason why adding the dual connective $\lnor$ could be attractive, nor to prefer
$\lnor$ to $\lnand$.

\section{The three-valued case}
\label{sec:three}
In this section we discuss the addition of a third constant to \CPmem, and also to \MSCLea\ and \NANDe, 
which represents the truth value ``undefined". 
This is a follow-up to the results in~\cite{BPS21} on \MSCLu, that is, \MSCL\ with undefinedness.

The constant \und\ is used to represent the third truth value \emph{undefined}.
In the setting with the conditional connective, this constant is defined by the axiom
\begin{equation}
\label{eq:und}
\tag{CP-$\und$}
x\lef\und\rig y=\und, 
\end{equation}
which should be added to the axiom system under consideration. 
We write \CPmemu\ for the extension of \CPmem\ with axiom~\eqref{eq:und}.
With $\und^\dl= \und$, it follows that \CPmemu\ also satisfies the duality principle.
Let \PSu\ be the set of closed terms belonging to \CPmemu.

\begin{table}
\hrule
\begin{align*}
\label{Und}
\tag{Und}
&\MSCLeau:&\neg\und&=\und
\\[2mm]
\label{NU}
\tag{NU}
&\NANDeu:&
\und\lnand x&=\und
\end{align*}
\hrule
\caption{The axiom for \und\ in \MSCLeau\ and in \NANDeu}
\label{tab:others}
\end{table}

For each of the equational logics defined by \MSCLea\ and \NANDe, the additional axiom for \und\ is given 
in Table~\ref{tab:others}, and we write \MSCLeau\ and \NANDeu\ for
the resulting axiom systems.
It follows that 
\begin{align*}
&\MSCLeau\,\vdash\und\leftand x=\und\leftor x=\und\liff x=\und,\quad\fa\leftand\und=\fa,\\
&\NANDeu\vdash\fa\lnand\und=\tr.
\end{align*}
In~\cite{BPS21}, \emph{memorising short-circuit logic with undefinedness}, 
notation $\MSCLu$, is defined as the equational logic that implies the part of the equational theory 
of $\CPmemu(\neg,\leftand,\leftor)$ that is expressed in $\SigSCLu=\SigSCL\cup\{\und\}$,
and it is proved that \MSCLu\ is axiomatised 
by $\MSCLeu=\MSCLe\cup\{\eqref{Und}\}$ (see \cite[Thm.7.16]{BPS21}):
\[\text{For all terms $s,t$ over $\SigSCLu,~\MSCLeu\vdash s=t ~\iff~\MSCLu\vdash s=t$}.\]
It immediately follows that when \MSCLeu\ is replaced by \MSCLeau, this result is preserved. Moreover,
the axioms of \MSCLeau\ are independent: this follows quickly with Mace4.

In the remainder of this section, we will further discuss \NANDeu.

\begin{theorem}
\label{thm:c2u}
~\\[1mm]
$1.$ $\NANDeu\cup\{\eqref{1l}, \eqref{2l}, \eqref{3l}\}
\vdash \MSCLeu$, 
\\[1mm] 
$2.$ $\MSCLeu\cup\{x\lnand y=\neg(x\leftand y)\}\vdash \NANDeu$, 
\\[1mm]
$3.$ The axioms for \NANDeu\ in Tables~\ref{tab:nand} and~\ref{tab:others} are independent.
\end{theorem}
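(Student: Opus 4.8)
The plan is to obtain parts 1 and 2 as small extensions of the already-proved two-valued correspondence in Theorem~\ref{thm:c2}, and to settle part 3 with Mace4, exactly as was done for \MSCLeau. For part 1, I would first note that $\NANDeu\supseteq\NANDe$, so Theorem~\ref{thm:c2}.1 immediately gives that $\NANDeu\cup\{\eqref{1l},\eqref{2l},\eqref{3l}\}$ derives all of \MSCLe. Only the extra axiom \eqref{Und}, namely $\neg\und=\und$, then remains, and this is a one-line derivation: by the $\ell$NAND-definition of negation~\eqref{1l} we have $\neg\und=\und\lnand\tr$, and by the new axiom~\eqref{NU}, $\und\lnand\tr=\und$; hence $\neg\und=\und$ and \MSCLeu\ is derived.

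For part 2, symmetrically $\MSCLeu\supseteq\MSCLe$, so Theorem~\ref{thm:c2}.2 already yields the three core axioms \eqref{N1}--\eqref{N3}, and it suffices to derive \eqref{NU}, that is $\und\lnand x=\und$. Using the defining equation $x\lnand y=\neg(x\leftand y)$ we get $\und\lnand x=\neg(\und\leftand x)$; since \MSCLeu\ proves the valid \MSCLu-identity $\und\leftand x=\und$ (readily confirmed with Prover9), this equals $\neg\und$, which is $\und$ by~\eqref{Und}. Thus \eqref{NU} follows and \NANDeu\ is derived.

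For part 3 the approach is to produce, for each of the four axioms \eqref{N1}, \eqref{N2}, \eqref{N3} and \eqref{NU}, a finite $\lnand$-algebra with distinguished elements interpreting $\tr$, $\fa$ and $\und$ that validates the other three axioms but refutes the selected one; these are precisely the counter-models Mace4 generates. I expect the only conceptually distinctive case to be the independence of~\eqref{NU}: since it is the sole axiom constraining $\und$, its counter-model must interpret $\und$ as an element with $\und\lnand x\neq\und$ for some $x$ while still satisfying \eqref{N1}--\eqref{N3}, whereas the counter-models for the three core axioms may interpret $\und$ as any left zero of $\lnand$ so that \eqref{NU} holds automatically. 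Finally, I would record that, because each of these four models interprets the whole $\ell$NAND-with-$\und$ signature, their reducts to the $\und$-free signature simultaneously witness that the three axioms of Table~\ref{tab:nand} are independent, which is exactly the claim invoked in the proof of Theorem~\ref{thm:c2}.3.
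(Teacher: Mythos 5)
Your proposal is correct, and for parts 1 and 2 it takes a genuinely different route from the paper, whose entire proof is one sentence: statements 1 and 2 follow easily with Prover9, and statement 3 follows quickly with Mace4 --- that is, the authors simply re-run the tools on the enlarged axiom systems and make no appeal to Theorem~\ref{thm:c2}. You instead reuse Theorem~\ref{thm:c2}.1--2 for the \und-free core and isolate the three-valued content in two one-line derivations, $\neg\und=\und\lnand\tr=\und$ (by \eqref{1l} and \eqref{NU}) and $\und\lnand x=\neg(\und\leftand x)=\neg\und=\und$ (by the defining equation, $\und\leftand x=\und$, and \eqref{Und}). This is sound and non-circular: the paper proves Theorem~\ref{thm:c2}.1--2 directly with Prover9, and only Theorem~\ref{thm:c2}.3 is obtained by citing the present theorem, whose part 3 you establish directly with Mace4 exactly as the paper does. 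What your decomposition buys is transparency --- it exhibits precisely what the constant \und\ adds and confines new machine checking to a single auxiliary fact; what the paper's approach buys is brevity and uniformity. The one point you should tighten: the identity $\und\leftand x=\und$ is displayed in Section~\ref{sec:three} only for \MSCLeau, not for \MSCLeu, so in your part 2 it needs its own justification in \MSCLeu\ --- via Prover9 as you suggest, via the completeness result for \MSCLeu\ quoted from~\cite{BPS21}, or by hand from \eqref{Und} together with \eqref{Tand}, \eqref{Abs}, \eqref{Or}, \eqref{Neg} and the \MSCLe-consequences \eqref{C1}, \eqref{C2}, \eqref{F5}, \eqref{ax5} and \eqref{F9} of Table~\ref{tab:consequences} (first $\und\leftand\und=\und$, then $\und\leftand\fa=\und$, then $\und\leftor\tr=\und$, then $\und\leftor y=\und\leftand y$, then $\und\leftand y=\und$). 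Finally, your closing observation that the reducts of the four Mace4 counter-models to the \und-free signature witness the independence of \NANDe\ is exactly the role this theorem plays in the paper: the proof of Theorem~\ref{thm:c2}.3 consists precisely of citing part 3 here, so your version makes that dependency explicit rather than leaving it implicit in the tool runs.
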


\begin{proof}
Statements 1 and 2 follow easily with Prover9, and 3 follows quickly with Mace4. 
\end{proof}

To illustrate an advantage of $\NANDeu$ over $\MSCLeu$ (or, $\MSCLeau$), we prove a representation 
result for basic forms. This still requires a lot of detail, but is simpler and more straightforward 
than the proof of the corresponding representation result for $\MSCLeu$ in~\cite[La.7.9 and La.7.10]{BPS21}).

\begin{definition}
\label{def:msnfu}
\textbf{Memorising \und-Nand Basic Forms $(\MSNFus)$ over $A$} are inductively defined:
\begin{itemize}
\item $\tr, \fa, \und$ are \MSNFus, 
and
\item For $a\in A$, $(a\lnand P)\lnand (a'\lnand Q)$ is a \MSNFu\  
if $P$ and $Q$ are \MSNFus\ that do not contain $a$.
\end{itemize}
We write \MFu\ for the set of all \MSNFus\ over $A$.
\end{definition}

The following functions on \MFu\ are used to compose \MSNFus.

\begin{definition}
\label{def:rmau}
For $a\in A$, the function $\tr_a^\und:\MFu\to\MFu$ is defined by 
\begin{align*}
\tr_a^\und(\tr)&=\tr,\quad
\tr_a^\und(\fa)=\fa,\quad
\tr_a^\und(\und)=\und,\\
\tr_a^\und((b\lnand P_1)\lnand (b'\lnand P_2))&=
\begin{cases}
P_1
&\text{if $b=a$},\\
(b\lnand \tr_a^\und(P_1))\lnand (b'\lnand \tr_a^\und(P_2))
&\text{otherwise}.
\end{cases}
\end{align*}
\\[1mm]
For $a\in A$ the function $\fa_a^\und:\MFu\to\MFu$ is defined by 
\begin{align*}
\fa_a^\und(\tr)&=\tr,\quad
\fa_a^\und(\fa)=\fa,\quad
\fa_a^\und(\und)=\und,\\
\fa_a^\und((b\lnand P_1)\lnand (b'\lnand P_2))&=
\begin{cases}
P_2
&\text{if $b=a$},\\
(b\lnand \fa_a^\und(P_1))\lnand (b'\lnand \fa_a^\und(P_2))
&\text{otherwise}.
\end{cases}
\end{align*}
\end{definition}

So, $\tr_a^\und$ removes the $a$-occurrences in a \MSNFu\ under the assumption that $a$ evaluates to \true,
and $\fa_a^\und$ does this under the assumption that $a$ evaluates to \false.
Note that for each $P\in\MFu$, both $\tr_a^\und(P)$ and $\fa_a^\und(P)$ are also \MSNFus.
To compose \MSNFus\ we use the following lemma.

\begin{lemma}
\label{la:composeu}
For all $a\in A$ and $P\in\MFu$, 
\\[1mm]
$1.$ $\NANDeu\vdash a\lnand P=a\lnand\tr_a^\und(P)$ and $a\lnand P'=a\lnand \tr_a^\und(P)'$,
\\[1mm]
$2.$ $\NANDeu\vdash a'\lnand P=a'\lnand\fa_a^\und(P)$
and $a'\lnand P'=a'\lnand \fa_a^\und(P)'$.
\end{lemma}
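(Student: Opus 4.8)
The plan is to prove statements~1 and~2 by structural induction on $P\in\MFu$, treating within each statement the unprimed and the primed equation \emph{simultaneously}, since the inductive step for one will feed on the induction hypothesis for the other. Statement~2 is entirely analogous to statement~1, obtained by replacing $a$ throughout by $a'$ and using $a''=a$ together with the commutativity law $(x\lnand y)\lnand(x'\lnand z)=(x'\lnand z)\lnand(x\lnand y)$ derived right after Table~\ref{tab:nand}; so I would spell out statement~1 and then note that statement~2 follows the same way. Before starting, I would record the auxiliary identity
\begin{equation*}
\big((x\lnand y)\lnand(x'\lnand z)\big)' = (x\lnand y')\lnand(x'\lnand z'), \tag{$*$}
\end{equation*}
obtained by applying $(\cdot)'$ to the last characterisation in~\eqref{char} and using $x''=x$; this is what lets me move a negation across the conditional shape in the primed equations.

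For the base cases $P\in\{\tr,\fa,\und\}$ the functions $\tr_a^\und$ and $\fa_a^\und$ act as the identity (Definition~\ref{def:rmau}), so both equations hold reflexively. For the inductive step I write $P=(b\lnand P_1)\lnand(b'\lnand P_2)$ with $P_1,P_2\in\MFu$ and split on whether $b=a$. If $b=a$, then $\tr_a^\und(P)=P_1$, and the unprimed equation $a\lnand\big((a\lnand P_1)\lnand(a'\lnand P_2)\big)=a\lnand P_1$ is exactly the instance of~\eqref{b4} with $x:=a$, $y:=P_1$, $z:=P_2$; the primed equation follows by first rewriting $P'$ via~$(*)$ into $(a\lnand P_1')\lnand(a'\lnand P_2')$ and then applying~\eqref{b4} once more. (In statement~2 this same subcase additionally needs the commutativity law to bring $(a\lnand P_1)\lnand(a'\lnand P_2)$ into the shape $(a'\lnand P_2)\lnand(a''\lnand P_1)$ before~\eqref{b4} applies with $x:=a'$, which is the only place that law is used.)

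The subcase $b\neq a$ is the main obstacle, as it is here that the memorised value of $a$ must be propagated through a different head atom. Now $\tr_a^\und(P)=(b\lnand\tr_a^\und(P_1))\lnand(b'\lnand\tr_a^\und(P_2))$, and I would argue in three moves: first push $a$ inward using~\eqref{b5} with $x:=a$, $y:=b$, turning $a\lnand P$ into $a\lnand\big((b\lnand(a\lnand P_1'))\lnand(b'\lnand(a\lnand P_2'))\big)$; then apply the induction hypothesis for the \emph{primed} equation to the congruent subterms $a\lnand P_1'$ and $a\lnand P_2'$, replacing them by $a\lnand\tr_a^\und(P_1)'$ and $a\lnand\tr_a^\und(P_2)'$; and finally pull $a$ back out by reading~\eqref{b5} right-to-left (using $x''=x$), reaching $a\lnand\tr_a^\und(P)$. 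The primed equation in this subcase runs through the same three moves, except that~$(*)$ is invoked at both ends to carry the outer negation across the conditional shape, so that the middle step this time appeals to the induction hypothesis for the \emph{unprimed} equation; both $a\lnand P'$ and $a\lnand\tr_a^\und(P)'$ then reduce to the common form $a\lnand\big((b\lnand(a\lnand\tr_a^\und(P_1)))\lnand(b'\lnand(a\lnand\tr_a^\und(P_2)))\big)$. This crossover is precisely the interdependence that forces the simultaneous induction. The only points needing care are the exact matching of the two~\eqref{b5} instances (including the implicit double negations) and checking that each use of the induction hypothesis sits at a genuine subterm, so that congruence of $=$ under \NANDeu\ legitimises the substitution.
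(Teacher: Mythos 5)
Your proposal is correct and follows essentially the same route as the paper's own proof: a simultaneous structural induction on the unprimed and primed equations, with the $b=a$ case settled by~\eqref{b4} (plus the \eqref{char}-derived identity for the primed variant) and the $b\neq a$ case by the \eqref{b5}--IH--\eqref{b5} sandwich, including the crossover use of the primed/unprimed induction hypotheses. Your explicit reduction of statement~2 to the commutativity law $(x\lnand y)\lnand(x'\lnand z)=(x'\lnand z)\lnand(x\lnand y)$ is a welcome detail that the paper compresses into ``Statement 2 follows in a similar way.''
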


\begin{proof}
Statement 1 follows by induction on the structure of $P$. If $P \in\{\tr, \fa,\und\}$ this is trivial
because $\tr_a^\und(P)=P$.
For the induction step, four cases have to be dealt with: 
\\-
If $P = (a\lnand P_1) \lnand (a' \lnand P_2)$ 
then 
\[a\lnand P\stackrel{\eqref{b4}}= a\lnand P_1=a\lnand\tr_a^\und(P),\hspace{46mm}
\] 
and 
\begin{align*}
a\lnand P'
&=a\lnand((a\lnand P_1') \lnand (a' \lnand P_2'))
&&\text{by \eqref{char}}\\
&=a\lnand P_1'
&&\text{by~\eqref{b4}}\\
&=a\lnand\tr_a^\und(P_1)'
&&\text{by IH}\\
&=a\lnand\tr_a^\und(P)'.
\end{align*} 
\\-
If $P= (b\lnand P_1)\lnand(b'\lnand P_2)$ for $b\neq a$, then 
\begin{align*}
a\lnand P
&=a\lnand((b\lnand (a \lnand P_1'))\lnand(b'\lnand(a\lnand P_2')))
&&\text{by \eqref{b5}}\\
&=a\lnand((b\lnand (a \lnand \tr_a^\und(P_1)'))\lnand(b'\lnand(a\lnand \tr_a^\und(P_2)')))
&&\text{by IH}\\
&=a\lnand((b\lnand \tr_a^\und(P_1))\lnand(b'\lnand \tr_a^\und(P_2)))
&&\text{by \eqref{b5}}\\
&=a\lnand \tr_a^\und((b\lnand P_1)\lnand(b'\lnand P_2)),
\end{align*}
and 
\begin{align*}
a\lnand P'
&=a\lnand((b\lnand P_1')\lnand(b'\lnand P_2'))
&&\text{by \eqref{char}}\\
&=a\lnand((b\lnand (a \lnand P_1))\lnand(b'\lnand(a\lnand P_2)))
&&\text{by \eqref{b5}}\\
&=a\lnand((b\lnand (a \lnand \tr_a^\und(P_1)))\lnand(b'\lnand(a\lnand \tr_a^\und(P_2))))
&&\text{by IH}\\
&=a\lnand((b\lnand \tr_a^\und(P_1)')\lnand(b'\lnand \tr_a^\und(P_2)'))
&&\text{by \eqref{b5}}\\
&=a\lnand((b\lnand \tr_a^\und(P_1))\lnand(b'\lnand \tr_a^\und(P_2)))'
&&\text{by \eqref{char}}\\
&=a\lnand \tr_a^\und((b\lnand P_1)\lnand(b'\lnand P_2))'.
\end{align*}

Statement 2 follows in a similar way. 
\end{proof}

\begin{theorem}
\label{thm:msnfu}
For each term $P$ constructed from $\{\lnand, \tr, \fa,\und,a\mid a\in A\}$ there is $Q\in\MFu$ such that $\NANDeu\vdash P=Q$.
\end{theorem}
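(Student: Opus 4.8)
The plan is to prove the statement by structural induction on the term $P$, where the substance lies in a composition lemma ensuring that the NAND of two \MSNFus\ can again be normalised to a \MSNFu. Since the signature is $\{\lnand,\tr,\fa,\und,a\mid a\in A\}$, the only compound terms are of the form $P=P_1\lnand P_2$, so the induction has the following shape. The constants $\tr,\fa,\und$ are themselves \MSNFus, and for an atom $a$ I would use $a=a'\lnand(a'\lnand\fa)=(a\lnand\tr)\lnand(a'\lnand\fa)$, the first equality being~\eqref{b3} (with $x:=a$) and the right-hand side being a \MSNFu\ with components $\tr$ and $\fa$. For the inductive step, the induction hypothesis yields \MSNFus\ $Q_1,Q_2$ with $\NANDeu\vdash P_i=Q_i$, hence $\NANDeu\vdash P=Q_1\lnand Q_2$, and it remains to normalise $Q_1\lnand Q_2$.

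For that I would first record an auxiliary \emph{priming} lemma: for every $Q\in\MFu$ there is $R\in\MFu$ with $\NANDeu\vdash Q'=R$ (recall that $x'$ abbreviates $x\lnand\tr$). This is proved by induction on $Q$: the cases $\tr'=\fa$, $\fa'=\tr$ and $\und'=\und$ follow from~\eqref{N1}, \eqref{b2} and~\eqref{NU}, while for $Q=(a\lnand P)\lnand(a'\lnand S)$ equation~\eqref{char} gives $Q'=(a\lnand P')\lnand(a'\lnand S')$, after which $P'$ and $S'$ are normalised by the induction hypothesis, no new occurrence of $a$ being introduced since $a$ occurs in neither $P$ nor $S$.

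The core is then the composition lemma: for all $Q_1,Q_2\in\MFu$ there is $R\in\MFu$ with $\NANDeu\vdash Q_1\lnand Q_2=R$, proved by induction on $Q_1$. When $Q_1$ is $\fa$ or $\und$ the result is $\tr$ or $\und$ by~\eqref{b2} and~\eqref{NU}; when $Q_1=\tr$ we have $\tr\lnand Q_2=Q_2'$ by~\eqref{zestien} and invoke the priming lemma. The decisive case is $Q_1=(a\lnand P_1)\lnand(a'\lnand P_2)$: equation~\eqref{b1} distributes the outer NAND, giving $Q_1\lnand Q_2=(a\lnand(P_1\lnand Q_2))\lnand(a'\lnand(P_2\lnand Q_2))$, and since $P_1,P_2$ are strictly smaller than $Q_1$ the induction hypothesis normalises $P_1\lnand Q_2$ and $P_2\lnand Q_2$ to \MSNFus\ $R_1,R_2$.

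The main obstacle is that $R_1,R_2$ may still contain $a$ (because $Q_2$ may), so $(a\lnand R_1)\lnand(a'\lnand R_2)$ need not satisfy the atom-distinctness condition required of a \MSNFu. This is exactly what Lemma~\ref{la:composeu} repairs: by its statements $1$ and $2$ we have $a\lnand R_1=a\lnand\tr_a^\und(R_1)$ and $a'\lnand R_2=a'\lnand\fa_a^\und(R_2)$, where $\tr_a^\und(R_1)$ and $\fa_a^\und(R_2)$ are \MSNFus\ free of $a$, so that $Q_1\lnand Q_2=(a\lnand\tr_a^\und(R_1))\lnand(a'\lnand\fa_a^\und(R_2))$ is a genuine \MSNFu. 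I expect the bookkeeping around this invariant---tracking which atoms the intermediate forms may contain and threading the priming lemma through the $\tr$-case of the composition---to be the only delicate part, since each individual equational step is routine given the derived equations~\eqref{b1}, \eqref{b2}, \eqref{b3} and~\eqref{char} together with Lemma~\ref{la:composeu}.
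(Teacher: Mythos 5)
Your proposal is correct and follows essentially the same route as the paper: structural induction on $P$ reduced to a composition lemma (NAND of two \MSNFus\ normalises to a \MSNFu), proved by induction with~\eqref{b1} distributing the outer NAND and Lemma~\ref{la:composeu} restoring the atom-distinctness invariant, with atoms handled via~\eqref{b3}. The only difference is presentational: you factor the $\tr$-case into a separate priming lemma ($\tr\lnand Q=Q'$ normalised by induction on $Q$), whereas the paper performs that same nested induction on $Q$ inline within the $P=\tr$ case of its auxiliary result.
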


\begin{proof}
We first prove an auxiliary result: if $P$ and $Q$ are \MSNFus, there is a 
\MSNFu\ $R$ such that $\NANDeu\vdash P\lnand Q=R$.
We prove this by induction on the structure of $P$.
\begin{itemize}
\item
If $P=\tr$, make a case distinction on $Q$: if $Q\in\{\tr,\fa,\und\}$, this is trivial, and 
if $Q=(a \lnand Q_1)\lnand(a'\lnand Q_2)$,
then by induction there are \MSNFus\ $R_i$ such that $\NANDeu\vdash R_i=\tr\lnand Q_i$, and since 
$a$ does not occur in the $Q_i$, it does not occur in the $R_i$ either. Derive
\begin{align*}
\tr\lnand((a \lnand Q_1)\lnand(a'\lnand Q_2))
&=((a \lnand Q_1)\lnand(a'\lnand Q_2))'
&&\text{by~\eqref{zestien}}\\
&= (a \lnand Q_1')\lnand(a'\lnand Q_2')
&&\text{by~\eqref{b1}}\\
&= (a \lnand (\tr\lnand Q_1))\lnand(a'\lnand (\tr\lnand Q_2))
&&\text{by~\eqref{zestien}}\\
&= (a \lnand R_1)\lnand(a'\lnand R_2).
\end{align*}
\item If $P =\fa$, then $\fa\lnand Q=\tr$ by~\eqref{b2}.
\item If $P =\und$, then $\und\lnand Q=\und$ by axiom~\eqref{Und}.
\item If $P = (a \lnand P_1)\lnand(a'\lnand P_2)$, then by 
induction there are \MSNFus\ $R_i$ such that $\NANDeu\vdash R_i= P_i\lnand Q$. 
Note that $Q$, and therefore each of $R_i$, can contain the atom $a$.
Derive
\begin{align*}
((a \lnand P_1)\lnand(a'\lnand P_2))\lnand Q
&=(a\lnand (P_1\lnand Q))\lnand(a'\lnand (P_2\lnand Q))
&&\text{by~\eqref{b1}}\\
&=(a\lnand R_1)\lnand(a'\lnand R_2)\\
&=(a\lnand \tr_a^\und(R_1))\lnand(a'\lnand \fa_a^\und(R_2)).
&&\text{by Lemma~\ref{la:composeu}}
\end{align*}
\end{itemize}
This concludes the proof of the auxiliary result.

The theorem follows easily by structural induction on $P$. 
The base cases \tr, \fa, \und\ are trivial  
and by~\eqref{b3}, $\NANDeu\vdash a= (a\lnand\tr)\lnand (a'\lnand \fa)$. The inductive case follows 
from the auxiliary result.
\end{proof}

Finally, we note there is a one-to-one correspondence between \MBFu\ (the mem-basic forms 
that can also contain the constant \und) and \MFu: define $f$ and $g$ by
$f(n)=g(n)=n$ for $n\in\{\tr,\fa,\und\}$, 
$f(P\lef a\rig Q)=(a\lnand f(P))\lnand (a'\lnand f(Q))$, and
$g((a\lnand P)\lnand (a'\lnand Q))=g(P)\lef a\rig g(Q)$.
It easily follows that $g\circ f$ and $f\circ g$ are the identity on \MBF\ and \MFu, respectively.

\section{Conclusions}
\label{sec:conc}
We begin with a remark about the incorporation of the constants \tr\ and \fa\ 
in short-circuit logics, the logics that in programming model the use of conditions and 
prescribe short-circuit evaluation (more information and motivation can be found 
in~\cite{PS18,BPS21}). Non-commutative propositional logic, or MSCL, deals with the case 
where atomic side effects do not occur and requires incorporation of (at least one) of these
constants because they are not definable: 
for no atom $a$ does it hold that $a\leftor\neg a = \tr$ or $a\leftand\neg a = \fa$
(compare this with \CP\ and \CPmem, which require both constants). 
As for the ``non-commutativity" from the title of this paper, the difference between 
$a\leftand\fa$ and $\fa\leftand a$ is that the first expression requires evaluation of 
$a$ and the second does not, so these two expressions are not identified in MSCL.

This paper can be seen as a continuation of~\cite{BPS21} because it introduces
variants of \MSCL\ with additional or alternative connectives:
\begin{enumerate}
\item 
The addition of the connective $\ell$IFF to \MSCL\ (and/or its dual $\ell$XOR) can be 
motivated as a matter of convenience. The connectives $\ell$IFF and $\ell$XOR are definable 
in \MSCLe, so their addition is not essential and can only contribute to more comprehensible 
axioms and derivations. 

Conversely, negation is definable by the addition of any of these two by 
$\neg x =\fa \liff x$ (or $\neg x=x\liff\fa$) and $\neg x=\tr \lxor x$ (or $\neg x=x\lxor\tr$),
and thereby also the other connective is definable: $x \lxor y = x \liff \neg y=\neg(x\liff y)$. 
However, it does not seem an attractive idea to omit negation.

Finally, omitting $\leftand$ and $\leftor$ from \MSCLea, either one or 
both of $\liff$ and $\lxor$ is modulo memorising valuation congruence not 
sufficiently expressive, which can be easily seen by considering $a \leftand b$.

\item
A preference for \NANDe\ over \MSCLe\ can be motivated as a technical improvement.
As suggested in Section~\ref{sec:nand} (and illustrated in Section~\ref{sec:three}), 
the completeness result for \MSCLe\ proved in~\cite{BPS21} is easier to prove with help of \NANDe,
which has only three axioms and allows smaller and simpler inductive proofs
of the supporting lemmas. Then, Theorem~\ref{thm:c2} implies the completeness of \MSCLe. 
Similar remarks can be made for preferring \NANDeu\ to \MSCLeu.

\end{enumerate}

A second goal of this paper is to emphasise that the mem-basic
forms introduced in~\cite{BP11} themselves provide a semantics for closed terms:
this is based on the results in~\cite{BP15} on the correspondence between mem-basic
forms and evaluation trees, and on the coinciding congruences defined by each 
(see~\cite[Prop.5.13]{BP15}). 
It is not difficult to prove that this is also true for the extension to \Terms,
the set of closed terms belonging to $\CP(\neg,\leftand,\leftor)$ and $\CPmem(\neg,\leftand,\leftor)$,
compare Theorems~\ref{thm:1ab} and~\ref{thm:membfb}. 

We conclude with a reflection on the definition of \MSCL\ and the advantages and 
disadvantages of using Hoare's conditional as an auxiliary operator to define this 
short-circuit logic. 
For the question of which laws axiomatise short-circuit evaluation, the appeal to 
an auxiliary operator does not seem appropriate, and axioms for the Boolean short-circuit 
connectives are more interesting, so this can be considered a disadvantage.
On the other hand, the usefulness of the conditional connective as a means of proving 
properties of short-circuit logics\footnote{%
   Note that all short-circuit logics introduced in \cite{BPS13} are 
   defined with help of the conditional connective.} 
can be easily demonstrated.
A first advantage of $\CPmem(\neg,\leftand,\leftor)$ (the underlying set of axioms
of \MSCL)
over $\MSCLea$ and $\NANDe$ is that 
in its mem-basic forms (\MBF) each atom occurs at most once, which is also true of their \und-variants.
The analogue of Theorem~\ref{thm:msnfu} for \PSu-terms, viz,
\[
\text{For each $P\in\PSu$ there is $Q\in\MBFu$ such that $\CPmemu\vdash P=Q$,}
\]
is therefore much easier to prove (for the case without \und\ this is shown in~\cite[La.8.2]{BP11},
and \und\ does not provide a worrying extension here). 
As a second example, equation~\eqref{F9} in Table~\ref{tab:consequences}
has a simple proof in $\CPmem(\neg,\leftand,\leftor)$ (in which the axiom~\eqref{CPmem} is not needed):
\[(x\leftand \fa)\leftor y=\tr\lef(\fa\lef x\rig\fa)\rig y
=y\lef x\rig y=y\lef(\tr\lef x\rig\tr)\rig\fa=(x\leftor\tr)\leftand y,
\]
but a proof in \MSCLe\ is not so simple. 
As a last example, one can compare a proof of
the associativity of $\liff$ in $\CPmem\cup\{x\liff y=y\lef x\rig(\fa\lef y\rig\tr)\}$ 
with one in \MSCLea. Elaborating on this, there is a strong case for introducing 
the abbreviation $x'$ for $\fa\lef x\rig\tr$ in \CP, which yields equations like
\[(x\lef y\rig x')'=x\lef y'\rig x'\]
(which follow easily in \CP). This allows a very simple proof of the associativity of $\liff$ in 
$\CPmem\cup\{x\liff y=y\lef x\rig y'\}$ (again, the axiom~\eqref{CPmem} is not needed):
\begin{align*}
(x\liff y)\liff z=
z\lef(y\lef x\rig y')\rig z'
&=(z\lef y\rig z')\lef x\rig(z\lef y'\rig z')\\
&=(z\lef y\rig z')\lef x\rig(z\lef y\rig z')'
= x\liff(y\liff z).
\end{align*}

\paragraph{Related work.}

In 1948, Church introduced in~\cite{Chu48} the \emph{conditioned disjunction} $[p,q,r]$, 
which, following the author, may be read ``$p$ or $r$ according as $q$ or not $q$'' 
and which expresses exactly the same connective as Hoare's conditional (introduced in 1985). 
Church showed that this connective together with 
constants $\tr$ and $\fa$ form a complete set of independent primitive connectives for the propositional calculus. 
Church also noted that for propositional variables $a,b,c$, the dual of $[a,b,c]$ is simply $[c,b,a]$, so that
to dualize an expression of the propositional calculus in which the only connectives occurring are
conditioned disjunction, \tr, and \fa, it is sufficient to write the expression backwards and at the
same time to interchange \tr\ and \fa. 
For the conditioned disjunction, reference~\cite{Chu56} is often used, and also the name 
\emph{conditional disjunction}.
Although $[x,y,z]$ has explicit scoping, more complex expressions
such as $[[a,b,[b,a,c]],[a,b,[c,a,b],[a,b,c]]$
are difficult to read and for this reason we prefer Hoare's conditional connective $x\lef y\rig z$.

In 2013, we defined \emph{free short-circuit logic} (notation \FSCL) in~\cite{BPS13}
as the equational logic that implies the part of the equational theory 
of $\CP(\neg,\leftand,\leftor)$ that is expressed in $\SigSCL$.
In~\cite{PS18}, the set of axioms $\SCLe$ is defined as $\text{(F1)} - \eqref{F10}$, 
where $\text{(F1)}=\eqref{Neg}$, $\text{(F2)}=\eqref{Or}$, $\text{(F4)}=\eqref{Tand}$, 
and the remaining
axioms are listed in Table~\ref{tab:consequences}, and it is proved that $\SCLe$
axiomatises \FSCL\ for closed terms. We note that in \FSCL, not all basic forms can be expressed,
for example not those of $a\lef b\rig c$ and $a\lef a\rig (\fa\lef a\rig\tr)$
(see~\cite[Prop.12.1 and Thm.12.2]{BP11}).

In 2020, Cornets de Groot defined in~\cite{Cor20} the two left-sequential connectives 
$\ell$XOR and $\ell$NAND and studied their relation with \FSCL. 
In that paper, complete equational axiomatisations 
of the resulting logical systems for closed terms are provided, while also attention is paid to the dual
connectives $\ell$IFF and $\ell$NOR, respectively, and to expressiveness issues under free valuation
congruence.

\paragraph*{Acknowledgement.}
We thank Sven Cornets de Groot 
for pointing us to the conditioned disjunction defined in~\cite{Chu48,Chu56}.

\addcontentsline{toc}{section}{References}

\end{document}